\documentclass[12pt]{article}
\usepackage{lMac}
\usepackage{becMac}

\renewcommand{\thetheorem}{\arabic{theorem}}

\newcommand{\cHu}{\mathcal{H}}
\newcommand{\Qc}{{\check Q}_-}
\newcommand{\cmp}{\mathrm{cp}}

\newcommand{\Sc}{{\check S}}
\newcommand{\Su}{S}

%\graphicspath{{../fig/}}
   %%%    y displays internal remarks
       %%%    y displays "issue" internal remarks

%\newcommand{\editdate}{May 18, 2016}
%\newcommand{\editdate}{}

\begin{document}
\title{The Algebra of Block Spin Renormalization Group Transformations}

\author{Tadeusz Balaban}
\affil{\small Department of Mathematics \authorcr
       % Hill Center--Busch Campus \authorcr
       Rutgers, The State University of New Jersey \authorcr
     %  110 Frelinghuysen Rd \authorcr
      % Piscataway, NJ 08854-8019 \authorcr
       tbalaban@math.rutgers.edu\authorcr
       \  }

\author{Joel Feldman\thanks{Research supported in part by the Natural 
                Sciences and Engineering Research Council 
                of Canada and the Forschungsinstitut f\"ur 
                Mathematik, ETH Z\"urich.}}
\affil{Department of Mathematics \authorcr
       University of British Columbia \authorcr
     %  Vancouver, B.C.,   %\authorcr
      %  Canada\ \   V6T 1Z2 \authorcr
       feldman@math.ubc.ca \authorcr
       http:/\hskip-3pt/www.math.ubc.ca/\squig feldman/\authorcr
       \  }

\author{Horst Kn\"orrer}
\author{Eugene Trubowitz}
\affil{Mathematik \authorcr
       ETH-Z\"urich \authorcr
      %  ETH-Zentrum \authorcr
      % CH-8092 Z\"urich, %\authorcr
      %  Switzerland \authorcr
       knoerrer@math.ethz.ch, trub@math.ethz.ch \authorcr
       http:/\hskip-3pt/www.math.ethz.ch/\squig knoerrer/}

%\date{\editdate}

\maketitle

\begin{abstract}
\noindent
Block spin renormalization group is the main tool  used in our  program 
to see symmetry breaking in a weakly interacting many Boson system 
on a three dimensional lattice at low temperature. 
In this paper, we discuss some of its purely algebraic aspects 
in an abstract setting. For example, we derive some ``well known'' identities 
like the composition rule and the relation between critical 
fields and background fields. 

\end{abstract}

%\newpage
%\tableofcontents

\newpage
%\section{Introduction}
One standard implementation of the renormalization group philosophy \cite{Wil} 
uses block spin transformations. See \cite{KAD,BalLausane,GK,BalPalaiseau,Dim1}.
Concretely, suppose  we are to control a functional integral on a finite\footnote{Usually, 
the finite lattice is a ``volume cutoff'' infinite lattice and one 
wants to get bounds that are uniform in the size of the volume cutoff.} 
lattice $\cX_-$ of the form
\begin{equation}\label{introbstrbasiscfi}
\int \smprod_{x\in\cX_-} \sfrac{d\phi^*(x) d\phi(x)}{2\pi i}\,
 e^{A(\al_1,\cdots,\al_s;\phi^*,\phi)}
 \end{equation}
with an action $A(\al_1,\cdots,\al_s;\phi_*,\phi)$ that is a function 
of external complex valued fields $\al_1$, $\cdots$, $\al_s$,
and the two\footnote{ In the actions, we  treat $\phi$ and its complex conjugate $\phi^*$ as independent variables.} complex fields $\phi_*,\phi$ on $\cX_-$. This scenario occurs in \cite{PAR1,PAR2}, where  we use  block 
spin renormalization group maps to exhibit the formation of a potential 
well, signalling the onset of symmetry breaking in
a many particle system of weakly interacting Bosons in three space dimensions. 
(For an overview, see \cite{ParOv}.)
For simplicity, we suppress the external fields in this paper.

Under the  renormalization group approach to controlling integrals like
\eqref{introbstrbasiscfi} one successively ``integrates out'' 
lower and lower energy degrees of freedom. In the block spin formalism this 
is implemented by considering a  decreasing sequence of sublattices of $\cX_-$.  
The formalism produces, for each such sublattice, a representation of the
integral \eqref{introbstrbasiscfi} that is a functional integral whose
integration variables are indexed by that sublattice. To pass from the representation
associated with one sublattice $\cX\subset\cX_-$, with integration variables $\psi(x)$, $x\in\cX$, to the 
representation associated to the next coarser sublattice $\cX_+\subset\cX$,
with integration variables $\th(y)$, $y\in\cX_+$, one 
\begin{itemize}[leftmargin=*, topsep=2pt, itemsep=0pt, parsep=0pt]
\item 
paves $\cX$ by rectangles centered at the points of $\cX_+$ 
(this is illustrated in the figure below --- the dots, both small
and large, are the points of $\cX$ and the large dots are the points 
of $\cX_+$)
and then, 
\begin{figure}[ht]
  \centering
    \includegraphics[width=0.5\textwidth]{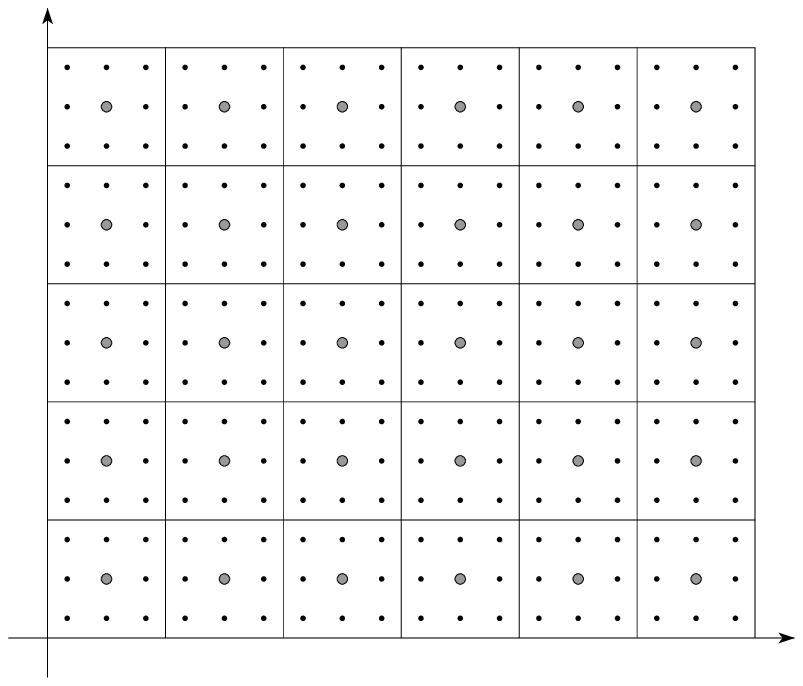}
  \caption{The lattices $\cX$ and $\cX_+$}
\end{figure}
\item 
for each $y\in\cX_+$ integrates out all values of $\psi$ whose ``average 
value'' over the rectangle centered at $y$ is equal to $\th(y)$.
The precise ``average value'' used is determined by an averaging profile.
One uses this profile to define an averaging operator $Q$ from  the space $\cH$ of fields on 
 $\cX$ to the space $\cH_+$ of fields on  $\cX_+$. One then implements the
``integrating out'' by first, inserting, into the integrand, $1$ expressed as a 
constant times the Gaussian integral
\begin{equation}\label{eqnBSgaussianOne}
\int \smprod_{y\in\cX_+} \sfrac{d\th^*(y) d\th(y)}{2\pi i}
                        e^{-b\< \th^*-Q\,\psi_*\,,\, \th-Q\,\psi) \>} 
\end{equation}
with some constant $b>0$, and then interchanging the order of the $\th$ and
$\psi$ integrals.
\end{itemize}
For example, in \cite{ParOv,PAR1,PAR2} the model is initially formulated as 
a functional integral with integration variables indexed by 
a lattice\footnote{The volume cutoff is determined by $L_\tp$ and $L_\sp$.}
$\big(\bbbz/L_\tp\bbbz\big)\times\big(\bbbz^3/L_\sp\bbbz^3\big)$.
After $n$ renormalization group steps this lattice is scaled down
to $\cX_n= \big(\sfrac{1}{L^{2n}}\bbbz \big/ \sfrac{L_\tp}{L^{2n}}\bbbz\big)
   \times \big(\sfrac{1}{L^n}\bbbz^3 \big/\sfrac{L_\sp}{L^n}\bbbz^3 \big)$. 
The decreasing family of sublattices is
$\cX_{j}^{(n-j)} 
= \big(\sfrac{1}{L^{2j}}\bbbz \big/ \sfrac{L_\tp}{L^{2n}}\bbbz\big)
   \times \big(\sfrac{1}{L^{j}}\bbbz^3 \big/
                      \sfrac{L_\sp}{L^n}\bbbz^3 \big)$, 
$j=n$, $n-1$, $\cdots$.  The abstract lattices
$\cX_-$, $\cX$, $\cX_+$ in the above framework correspond to
$\cX_n$, $\cX_0^{(n)}$ and $\cX_{-1}^{(n+1)}$, respectively.

Return to the abstract setting. The integral
is often controlled using stationary phase/steepest descent.
The contributions to the integral that come from integration variables
close to their critical values are called ``small field'' contributions.
At the end of every step, the small field contribution to the original
integral \eqref{introbstrbasiscfi} is, up to a multiplicative normalization 
constant\footnote{See Remark \ref{remBSactionRecur} for the core of the
recursion responsible for this form.},
of the form 
\begin{equation}\label{introbstrstepnfi}
\int \smprod_{x\in\cX} \sfrac{d\psi^*(x) d\psi(x)}{2\pi i}\,
e^{-\< \psi^*-Q_-\,\phi_*\,,\, \fQ(\psi-Q_-\,\phi) \> - \fA(\phi_*,\phi)
                       +\cE(\psi^*,\psi) }
                       \bigg|_{\atop{\phi_* = \phi_{*\rm bg}(\psi^*,\psi)}
                                         {\phi = \phi_{\rm bg}(\psi^*,\psi)}} 
\end{equation}
where
\begin{itemize}[leftmargin=*, topsep=2pt, itemsep=0pt, parsep=0pt]
\item
$Q_-$ is an averaging operator that maps the space $\cH_-$ of fields on 
 $\cX_-$ to the space $\cH$ of fields on  $\cX$. It is the composition of the 
averaging operations for all previous steps.
\item the exponent $\< \psi^*-Q_-\,\phi_*\,,\, \fQ(\psi-Q_-\,\phi) \> $
is a residue of the exponents in the  Gaussian integrals 
\eqref{eqnBSgaussianOne} inserted in the
previous steps. The operator\footnote{See Remark \ref{remBSactionRecur} for the
recursion relation that builds $\fQ$.} $\fQ$ is bounded and boundedly invertible
on $L^2(\cX)$.
 
\item 
the ``background fields''
\begin{equation*}
(\psi_*,\psi) \mapsto \phi_{*\rm bg}(\psi_*,\psi) \qquad
(\psi_*,\psi) \mapsto \phi_{\rm bg}(\psi_*,\psi)
\end{equation*}
map sufficiently small fields $\psi_*$, $\psi$ on  $\cX$ to fields on
$\cX_-$. They are the concatination\footnote{See Proposition
\ref{propBSconcatbackgr}.c for the recursion relation that builds
$\phi_{(*)\rm bg}$.} 
of ``steepest descent'' critical field maps 
for all previous steps.

\item
$\,\fA(\phi_*,\phi)\,$, the ``dominant part'' of the action, is
an explicit function of $\phi_*,\phi\in\cH_-$

\item
$\,\cE(\psi_*,\psi)\,$ is the contribution to the action that consists 
of ``perturbative corrections''. It is an analytic function of
$\psi_*,\psi\in\cH$.
\end{itemize}
The next block spin renormalization group step then consists of
\begin{itemize}[leftmargin=*, topsep=2pt, itemsep=0pt, parsep=0pt]

\item rewriting \eqref{introbstrstepnfi}, 
by inserting $1$ expressed as a constant times \eqref{eqnBSgaussianOne},
as
\begin{equation}\label{eqnBSintFunctInt}
\begin{split}
&\int \smprod_{y\in\cX_+} \sfrac{d\th^*(y) d\th(y)}{2\pi i}
  \int \smprod_{x\in\cX} \sfrac{d\psi^*(x) d\psi(x)}{2\pi i}\,
e^{-b\< \th^*-Q\,\psi_*\,,\, \th-Q\,\psi \>} \\ \noalign{\vskip-0.2in}
&\hskip2in
e^{-\< \psi^*-Q_-\,\phi_*\,,\, \fQ(\psi-Q_-\,\phi) \> - \fA(\phi_*,\phi)
                       +\cE(\psi^*,\psi) }
                       \bigg|_{\atop{\phi_* = \phi_{*\rm bg}(\psi^*,\psi)}
                                            {\phi = \phi_{\rm bg}(\psi^*,\psi)}} 
\end{split}
\end{equation}
up to a multiplicative normalization constant,

\item
and performing a stationary phase argument, for the $\psi$ integral, 
around appropriate critical fields\footnote{$\psi_{*\rm cr}(\th_*,\th)$
and $\psi_{\rm cr}(\th_*,\th)$ need not be complex conjugates of each other} $\psi_{*\rm cr}(\th_*,\th)$, $\psi_{\rm cr}(\th_*,\th)$ that 
map sufficiently small fields $\theta_*$, $\theta$ on  $\cX_+$ to fields 
on $\cX$. 
\end{itemize}
\vskip .3cm
In this paper, we discuss some purely algebraic aspects of the block spin renormalization group in an abstract setting. We derive some ``well known'' identities 
like, 
in Proposition \ref{propBSconcatbackgr}.c, the composition rule, and,
in Proposition \ref{propBSconcatbackgr}.a, the relation between critical 
fields and background fields, and,
in Lemma \ref{lemBSdeltaAalernew}, a formula for the dominant part of the 
action in the fluctuation integral. They are used in
        Proposition \propBGAomnibus.b,
        Proposition \propBGAomnibus.a, and
        Lemma \lemSTdeA.a of
        \cite{PAR1},
respectively.

We use the following abstract environment: 
\begin{itemize}[leftmargin=*, topsep=2pt, itemsep=0pt, parsep=0pt]
\item 
Let $H_-$, $H$, $H_+$ be finite dimensional, real vector spaces 
with positive definite symmetric bilinear forms 
$\<\,\cdot\,,\,\cdot\,\>_-$, $\<\,\cdot\,,\,\cdot\,\>$,
$\<\,\cdot\,,\,\cdot\,\>_+$. These bilinear forms extend to nondegenerate
bilinear forms on their complexifications $\cH_-$, $\cHu$, $\cH_+$.
Think of $\,H_-$, $H$ and $H_+$ as being the vector spaces 
of real valued functions on the finite lattices 
$\cX_-$, $\cX$ and $\cX_+$, respectively,
and think of the complexifications $\cH_-$, $\cHu$, $\cH_+$ as being
$L^2(\cX_-)$, $L^2(\cX)$ and $L^2(\cX_+)$ respectively.

\item
Let $d\mu_{\cH}(\phi^*,\phi)$
be the volume form on $\cH$ determined by its bilinear form. If $\cH=L^2(\cX)$, then
$
d\mu_{\cH}(\phi^*,\phi)=\smprod_{x\in \cX} 
        \sfrac{ d\phi(x)^\ast\wedge d\phi(x)}{2\pi \imath}
$.

\item
Let 
\begin{equation*}
Q_-: H_- \rightarrow H \qquad Q: H \rightarrow H_+
\end{equation*}
be linear maps.
They induce $\bbbc$ linear maps between $\cH_-$, $\cHu$, $\cH_+$ which are
denoted by the same letter. We set
\begin{equation*}
\Qc = Q\circ Q_-
\end{equation*}

\item
Fix $b>0$ and a strictly positive definite (real) symmetric linear 
operator, $\fQ$,  on $H$.

\item
Let $\fA$ be a polynomial on $\cH_-\times\cH_-$.
\end{itemize}
Set, for $\phi_*,\phi \in \cH_-$, $\psi_*,\psi \in \cHu$ and
$\th_*,\th\in \cH_+$
\begin{align*}
\cA(\psi_*,\psi;\phi_*,\phi) 
&=  \< \psi_*-Q_-\,\phi_*\,,\, \fQ(\psi-Q_-\,\phi) \> + \fA(\phi_*,\phi) \\
\cA_\eff(\th_*,\th;\psi_*,\psi;\phi_*,\phi) 
&= b \< \th_*-Q\psi_*\,,\, \th-Q\psi \>_+ + \cA(\psi_*,\psi;\phi_*,\phi)  \\
\check\cA(\th_*,\th;\phi_*,\phi) 
&= \big< \th_*-\Qc\,\phi_*\,,\,\check \fQ
                \big( \th-\Qc\,\phi\big) \big>_+ + \fA(\phi_*,\phi) 
\end{align*}
where
\begin{equation}\label{eqnBSfQrecursion}
\check \fQ= \big(\sfrac{1}{b}\bbbone_{\cH_+}+Q\,\fQ^{-1}Q^*\big)^{-1}
\end{equation}

\begin{remark}\label{remBSactionRecur}
In this setting, the action of the functional integral \eqref{introbstrstepnfi}
that appears at the beginning of the renormalization group step is
\begin{equation*}
-\< \psi^*-Q_-\,\phi_*\,,\, \fQ(\psi-Q_-\,\phi) \> - \fA(\phi_*,\phi)
                       +\cE(\psi^*,\psi) 
=-\cA(\psi^*,\psi;\phi_*,\phi) +\cE(\psi^*,\psi) 
\end{equation*}
and the action of the functional integral \eqref{eqnBSintFunctInt}
that appears in the middle of the renormalization group step is
\begin{align*}
&-b\< \th^*-Q\,\psi_*\,,\, \th-Q\,\psi \>_+
      -\< \psi^*-Q_-\,\phi_*\,,\, \fQ(\psi-Q_-\,\phi) \> - \fA(\phi_*,\phi)
                       +\cE(\psi^*,\psi) \\
&\hskip2in = -\cA_\eff(\th^*,\th;\psi^*,\psi;\phi_*,\phi)  +\cE(\psi^*,\psi)
\end{align*}
We show in Proposition \ref{propBSconcatbackgr}.b, below, that when one 
substitutes the critical $\psi$ into $\cA_\eff$ one gets $\check\cA$. 
Upon scaling (and renormalizing) $\check\cA$ becomes the $\cA$
for the beginning of the next renormalization group step. Equation 
\eqref{eqnBSfQrecursion} is the recursion relation that builds the operator
$\fQ$ in $\cA(\psi_*,\psi;\phi_*,\phi) $.
\end{remark}

\begin{remark}\label{remotherrepcheckfQ}
$\ \ \ \ \check\fQ =b \big[ \bbbone_{\cH_+} -bQ{\big(bQ^* Q+\fQ\big)}^{-1}Q^*\big]\,$
\end{remark}
\begin{proof}
Apply Lemma  \ref{lembSabstrlinalg}  with $V=\cH$, $W=\cH_+$, $q=Q$, 
$q_*=Q^*$, $f=\fQ$ and $g=b\bbbone_W$.
\end{proof}

\pagebreak[2]
\begin{definition}\label{defBSbackfld}
\ 
\begin{enumerate}[label=(\alph*), leftmargin=*]
\item
Let $\cN$ be a domain in $\cH$ which is invariant under complex conjugation.
``Background fields on $\cN$'' are maps 
$
\phi_{*\bg},\phi_\bg:\cN\times\cN\rightarrow \cH_-
$
such that, for each $(\psi_*,\psi) \in \cN\times\cN\,$, the point
$\,\big(\phi_{*\bg}(\psi_*,\psi),\,\phi_\bg(\psi_*,\psi)\big)$
is a critical point of the map
\begin{equation*}
(\phi_*,\phi) \mapsto \cA( \psi_*,\psi;\,\phi_*,\phi)
\end{equation*}
That is, it solves
\begin{equation}\label{eqnBSbckgndequ}
\begin{split}
Q_-^* \fQ\,Q_- \phi_*+\nabla_{\phi}\fA (\phi_*,\phi)  &= Q_-^* \fQ \psi_*\\
Q_-^* \fQ\,Q_- \phi+\nabla_{\phi_*}\fA(\phi,\phi) &= Q_-^* \fQ\psi
\end{split}
\end{equation}
``Formal background fields'' are formal power series 
$\phi_{*\bg}(\psi_*,\psi),\,\phi_\bg(\psi_*,\psi)$, in $(\psi_*,\psi)$
with vanishing constant terms, that solve \eqref{eqnBSbckgndequ}.

\item
Let $\,\cN_+\,$ and $\cN$ be domains in  $\cH_+$ and $\cH$, respectively,
which are invariant under complex conjugation. Let $\phi_{*\bg},\phi_\bg$
be background fields on $\cN$. ``Critical fields on $\cN_+$ with respect to
$\phi_{*\bg},\phi_\bg$'' are maps 
$
\psi_{*\crt}, \psi_{\crt}:\cN_+\times\cN_+\rightarrow \cN
$
such that, for each $ (\th_*,\th) \in \cN_+ \times \cN_+$, 
the point
$\, \big(\psi_{*\crt}(\th_*,\th), \psi_{\crt}(\th_*,\th)\big)\,$
 is a critical point for the map
\begin{equation*}
\ (\psi_*,\psi) \mapsto 
\cA_\eff(\th_*,\th;\psi_*,\psi;
        \phi_{*\bg}(\psi_*,\psi),\phi_\bg(\psi_*,\psi)) 
\end{equation*}
That is, it solves
\begin{equation}\label{eqnBScritpointequ}
\begin{split}
(bQ^* Q+\fQ)\psi_*  &= bQ^*\th_* +\fQ\,Q_-\,\phi_{*\rm bg}(\psi_*,\psi)\\
(bQ^* Q+\fQ)\psi  &= bQ^*\th +\fQ\,Q_-\,\phi_{\rm bg}(\psi_*,\psi)
\end{split}
\end{equation}
If $\phi_{*\bg},\phi_\bg$ are formal background fields, then
``formal critical fields  with respect to $\phi_{*\bg},\phi_\bg$'' are 
formal power series $\psi_{*\crt}(\th_*,\th), \psi_{\crt}(\th_*,\th)$, 
in $(\th_*,\th)$ with vanishing constant terms, that solve \eqref{eqnBScritpointequ}.

\item
Let $\cN_+$ be a domain in $\cH_+$ which is invariant under complex conjugation.
``Next scale background fields on $\cN_+$'' are maps 
$
\check\phi_{*\bg},\check\phi_\bg:\cN_+\times\cN_+\rightarrow \cH_-
$
such that, for each $(\th_*,\th) \in \cN_+\times\cN_+\,$, the point
$\,\big(\check\phi_{*\bg}(\th_*,\th),\,\check\phi_\bg(\th_*,\th)\big)$
is a critical point of the map
\begin{equation*}
(\phi_*,\phi) \mapsto \check\cA( \th_*,\th;\,\phi_*,\phi)
\end{equation*}
That is, it solves
\begin{equation}\label{eqnBSnsbckgndequ}
\begin{split}
 \Qc^* \check \fQ\,\Qc \check\phi_*
+\nabla_{\check\phi}\fA (\check\phi_*,\check\phi)
                    &= \Qc^* \check \fQ\, \th_*\\
\Qc^* \check \fQ\,\Qc\check\phi
+\nabla_{\check\phi_*}\fA(\check\phi_*,\check\phi)
                    &= \Qc^* \check \fQ\,\th
\end{split}
\end{equation} 
Formal power series $\check\phi_{*\bg}(\th_*,\th)$, $\check\phi_\bg(\th_*,\th)$, 
in $(\th_*,\th)$ with vanishing constant terms, that solve 
\eqref{eqnBSnsbckgndequ} are called 
``formal next scale background fields''.
\end{enumerate}
\end{definition}

\begin{proposition}\label{propBSconcatbackgr}
Let $\,\cN_+\,$ and $\cN$ be domains in  $\cH_+$ and $\cH$, respectively,
which are invariant under complex conjugation. Let $\phi_{*\bg},\phi_\bg$
be background fields on $\cN$ and $\psi_{*\crt},\psi_\crt$
be critical fields on $\cN_+$ with respect to $\phi_{*\bg},\phi_\bg$.
Define the composition
\begin{equation}\label{eqnBScheckphibgde}
\begin{split}
\check\phi_{*\cmp}(\th_*,\th) 
&= \phi_{*\bg}\big(\psi_{*\crt}(\th_*,\th),\psi_{\crt}(\th_*,\th)\big)
\\
\check\phi_\cmp(\th_*,\th) 
&= \phi_\bg\big(\psi_{*\crt}(\th_*,\th),\psi_{\crt}(\th_*,\th)\big)
\end{split}
\end{equation}
Then, for all  $ (\th_*,\th) \in \cN_+ \times \cN_+$,
\begin{enumerate}[label=(\alph*), leftmargin=*]
\item
$\, \big(\psi_{*\crt}(\th_*,\th), \psi_{\crt}(\th_*,\th)\big)\,$ fulfils the equations
\begin{align*}
\psi_{*\crt}(\th_*,\th)
    &={(bQ^* Q+\fQ)}^{-1} 
       \big(bQ^*\th_* +\fQ\,Q_-\,\check\phi_{*\cmp}(\th_*,\th)\big)\\
\psi_\crt(\th_*,\th)
   &={(bQ^* Q+\fQ)}^{-1} 
        \big(bQ^*\th +\fQ\,Q_-\,\check\phi_\cmp(\th_*,\th)\big)
\end{align*}

\item
The effective action
\begin{align*}
&\cA_\eff\big(\th_*,\th;\psi_{*\crt}(\th_*,\th),\psi_{\crt}(\th_*,\th); 
            \check\phi_{*\cmp}(\th_*,\th),
                       \check\phi_\cmp(\th_*,\th)\big) \\
&\hskip2.5in=\check\cA(\th_*,\th; \check\phi_{*\cmp}(\th_*,\th),
                       \check\phi_\cmp(\th_*,\th))
\end{align*}

\item
$\check\phi_{*\cmp}(\th_*,\th)\,,\, \check\phi_\cmp(\th_*,\th)$ are next 
scale background fields on $\cN_+$.

\item
For any continuous function $\,\cE(\psi_*,\psi)\,$  on $\cN\times\cN$
\begin{align*}
&\int_{\cN\times\cN} \! d\mu_{\cHu}(\psi^*,\psi) \
e^{-\cA(\psi^*,\psi; \phi_{*\bg}(\psi_*,\psi),\phi_\bg(\psi_*,\psi))
                       +\cE(\psi^*,\psi) }\\
&\hskip0.2in=b^{\dim\cH_+}
   \bigg\{\int_{\cN_+\times\cN_+}\hskip-15pt d\mu_{\cH_+}(\th^*,\th)\ 
       e^{- \check\cA(\th^*,\th; \check\phi_{*\cmp}(\th^*,\th),
\check\phi_\cmp(\th^*,\th)) }\ 
               e^{\cE(\psi_{*\crt}(\th^*,\th),\psi_{\crt}(\th^*,\th))}
                      \cF(\th^*,\th)
\\
& \hskip0.3in 
+\int_{(\cH_+\times\cH_+)\setminus (\cN_+\times\cN_+)}
\hskip -40pt d\mu_{\cH_+}(\th^*,\th)
       \int_{\cN\times\cN} \hskip -15pt d\mu_{\cHu}(\psi^*,\psi) \,
   e^{- \cA_\eff(\th^*,\th;\psi^*,\psi;  \phi_{*\bg}(\psi^*,\psi),
                                               \phi_\bg(\psi^*,\psi)) 
                         + \cE(\psi^*,\psi) }\bigg\}
\end{align*}
where the fluctuation integral
\begin{align*}
\cF(\th_*,\th)&=\int_{\cD(\th_*,\th)} d\mu_{\cHu}(\de\psi_*,\de\psi) \, 
                     e^{-\de \cA(\th_*,\th;\de\psi_*,\de\psi) }
                      e^{\de\cE(\th_*,\th;\de\psi_*,\de\psi) }
\end{align*}
Here the functions $\de\cA$ and $\de\cE$ are given by
\begin{align*}
\de \cA(\th_*,\th;\de\psi_*,\de\psi) 
&=\cA_\eff\big(\th_*,\th;\psi_*,\psi;  \phi_{*\bg}(\psi_{*},\psi), \phi_\bg(\psi_*,\psi)\big)
      \Big|^{\psi_*=\psi_{*\crt}+\de\psi_*,\ \psi=\psi_{\crt}+\de\psi}
                   _{\psi_*=\psi_{*\crt},\ \psi=\psi_{\crt}}\\
\de \cE(\th_*,\th;\de\psi_*,\de\psi) 
&=\cE\big(\psi_*,\psi\big)
      \Big|^{\psi_*=\psi_{*\crt}+\de\psi_*,\ \psi=\psi_{\crt}+\de\psi}
                   _{\psi_*=\psi_{*\crt},\ \psi=\psi_{\crt}}
\end{align*}
with $\psi_{*\crt}=\psi_{*\crt}(\th_*,\th)$, $\psi_\crt=\psi_\crt(\th_*,\th)$,
and the domain
\begin{align*}
\cD(\th_*,\th)&=\set{(\de\psi_*,\de\psi)\in\cHu\times\cHu}
               {\psi_{*\crt}(\th_*,\th)+\de\psi_*
                    =\big(\psi_{\crt}(\th_*,\th)+\de\psi\big)^*\in\cN}
\end{align*}
\end{enumerate}
\end{proposition}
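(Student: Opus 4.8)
The plan is to establish the four parts in order, each building on the previous ones. Part~(a) is immediate: by the definition of critical fields, $(\psi_{*\crt},\psi_\crt)$ solves \eqref{eqnBScritpointequ}, and the background fields appearing on its right-hand side are, by the defining formula \eqref{eqnBScheckphibgde}, exactly $\check\phi_{*\cmp}$ and $\check\phi_\cmp$. Since $\fQ$ is positive definite and $bQ^*Q$ is positive semidefinite, $bQ^*Q+\fQ$ is invertible, and one simply solves \eqref{eqnBScritpointequ} for $\psi_{*\crt}$, $\psi_\crt$.

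Part~(b) is the computational core. I would abbreviate $w_*=\th_*-Q\psi_{*\crt}$, $w=\th-Q\psi_\crt$, $u_*=\psi_{*\crt}-Q_-\check\phi_{*\cmp}$, $u=\psi_\crt-Q_-\check\phi_\cmp$, $W_*=\th_*-\Qc\check\phi_{*\cmp}$ and $W=\th-\Qc\check\phi_\cmp$. The critical-field equations \eqref{eqnBScritpointequ} rearrange to $\fQ u_*=bQ^*w_*$ and $\fQ u=bQ^*w$, so $u_{(*)}=b\fQ^{-1}Q^*w_{(*)}$. Since $Q\psi_\crt=\Qc\check\phi_\cmp+Qu$, this gives $w_{(*)}=W_{(*)}-bQ\fQ^{-1}Q^*w_{(*)}$, which by the definition \eqref{eqnBSfQrecursion} of $\check\fQ$ reads $b\check\fQ^{-1}w_{(*)}=W_{(*)}$, i.e. $w_{(*)}=\frac1b\check\fQ W_{(*)}$. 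The two quadratic terms of $\cA_\eff$ then collapse, $b\<w_*,w\>_++\<u_*,\fQ u\>=b^2\<w_*,\check\fQ^{-1}w\>_+$, and substituting $w_{(*)}=\frac1b\check\fQ W_{(*)}$ yields $\<W_*,\check\fQ W\>_+$, which together with the common $\fA$ term is exactly $\check\cA$.

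For part~(c) I would argue by the envelope principle. The computation of part~(b) used only the linear relations $(bQ^*Q+\fQ)\psi_{(*)}=bQ^*\th_{(*)}+\fQ Q_-\phi_{(*)}$, not that $\phi$ is a background field; hence for every fixed $(\phi_*,\phi)$, if $(\Psi_*,\Psi)$ is the (unique) critical point of $(\psi_*,\psi)\mapsto\cA_\eff(\th_*,\th;\psi_*,\psi;\phi_*,\phi)$ at frozen $\phi$, then $\cA_\eff(\th_*,\th;\Psi_*,\Psi;\phi_*,\phi)=\check\cA(\th_*,\th;\phi_*,\phi)$. That is, $\check\cA$ is the partial critical value of $\cA_\eff$ over $\psi$. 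Differentiating this identity in $\phi_*$ and in $\phi$ and using $\nabla_{\psi_*}\cA_\eff=\nabla_\psi\cA_\eff=0$ at $(\Psi_*,\Psi)$ to annihilate the contributions of the $\phi$-dependence of $(\Psi_*,\Psi)$, one gets $\nabla_{\phi_*}\check\cA(\th;\phi_*,\phi)=(\nabla_{\phi_*}\cA)(\Psi_*,\Psi;\phi_*,\phi)$ and likewise for $\nabla_\phi$. Setting $(\phi_*,\phi)=(\check\phi_{*\cmp},\check\phi_\cmp)$, part~(a) identifies the frozen critical point $(\Psi_*,\Psi)$ there with $(\psi_{*\crt},\psi_\crt)$, while $(\check\phi_{*\cmp},\check\phi_\cmp)$ is by \eqref{eqnBScheckphibgde} a critical point of $(\phi_*,\phi)\mapsto\cA(\psi_{*\crt},\psi_\crt;\phi_*,\phi)$, i.e. solves \eqref{eqnBSbckgndequ}. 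Hence both gradients of $\check\cA$ vanish, which is \eqref{eqnBSnsbckgndequ}.

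Part~(d) is an exact change-of-variables identity. Into the left-hand integrand I would insert $1=b^{\dim\cH_+}\int d\mu_{\cH_+}(\th^*,\th)\,e^{-b\<\th^*-Q\psi^*,\,\th-Q\psi\>_+}$; this holds because on the contour $\th^*=\bar\th$ the exponent is $-b\<\th-Q\psi,\th-Q\psi\>_+$ with $Q$ real, so the Gaussian is genuinely translation invariant and integrates to $b^{-\dim\cH_+}$. The combined exponent is then $-\cA_\eff+\cE$, and (granting the absolute convergence needed for Fubini, e.g. $\cN$ bounded and $\cE$ continuous) the $\th$-integral may be taken outermost. Splitting its domain into $\cN_+\times\cN_+$ and the complement reproduces the remainder term inside the braces. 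On $\cN_+\times\cN_+$, for fixed $\th$ I translate $\psi=\psi_\crt(\th)+\de\psi$, $\psi^*=\psi_{*\crt}(\th)+\de\psi_*$; translation invariance of $d\mu_\cHu$ carries the contour onto $\cD(\th)$, and pulling the $\de\psi$-independent factor out of the integrand leaves exactly $\cF(\th)$ times $e^{-\cA_\eff(\th;\psi_{*\crt},\psi_\crt;\check\phi_{*\cmp},\check\phi_\cmp)+\cE(\psi_{*\crt},\psi_\crt)}$, where I used \eqref{eqnBScheckphibgde} to identify the frozen background fields. Part~(b) rewrites the $\cA_\eff$ in the exponent as $\check\cA(\th;\check\phi_{*\cmp},\check\phi_\cmp)$, giving the first term. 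The main obstacle is not any single computation but the measure-theoretic bookkeeping in part~(d): justifying the contour shift of the Gaussian, verifying absolute convergence so that Fubini applies on the possibly unbounded complement, and checking that the translated contour is precisely $\cD(\th)$ in view of the fact that $\psi_{*\crt}$ and $\psi_\crt$ need not be complex conjugates.
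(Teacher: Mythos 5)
Your proposal is correct, and its overall architecture coincides with the paper's: part (a) read off from \eqref{eqnBScritpointequ} and \eqref{eqnBScheckphibgde}; parts (b) and (c) derived from the identity that $\check\cA(\th_*,\th;\phi_*,\phi)$ is the value of $\cA_\eff$ at the $\psi$--critical point with $(\phi_*,\phi)$ frozen; part (d) by inserting the normalized Gaussian, applying Fubini, splitting the $\th$--domain, and translating $\psi_{(*)}$ by $\psi_{(*)\crt}$. The difference lies in how the frozen--$\phi$ identity is verified. The paper isolates it as Lemma \ref{lemBSpreparation}, valid for arbitrary $(\phi_*,\phi)\in\cH_-\times\cH_-$: it computes
$\th-Q\tilde\psi=\big[\bbbone-bQ(bQ^*Q+\fQ)^{-1}Q^*\big](\th-\Qc\phi)$ and
$\tilde\psi-Q_-\phi=b(bQ^*Q+\fQ)^{-1}Q^*(\th-\Qc\phi)$,
and then shows that a certain operator $\cO$ vanishes, relying on the resolvent representation $\check\fQ=b\big[\bbbone-bQ(bQ^*Q+\fQ)^{-1}Q^*\big]$ of Remark \ref{remotherrepcheckfQ} (itself an instance of Lemma \ref{lembSabstrlinalg}). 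Your computation is more elementary: from $\fQ u_{(*)}=bQ^*w_{(*)}$ you obtain $w_{(*)}=\frac{1}{b}\check\fQ W_{(*)}$ directly from the definition \eqref{eqnBSfQrecursion} of $\check\fQ$, with no resolvent identities, and the collapse $b\<w_*,w\>_+ + \<u_*,\fQ u\> = \<W_*,\check\fQ W\>_+$ checks out. Your remark that the computation uses only the linear relations $(bQ^*Q+\fQ)\psi_{(*)}=bQ^*\th_{(*)}+\fQ Q_-\phi_{(*)}$, and not the background--field property, is exactly what recovers the lemma in its general form, and your envelope argument for (c) is the same content as the paper's gradient formula \eqref{eqnBSpreparationgrad} --- the paper carries the $\nabla_{\psi_{(*)}}\cA_\eff$ term explicitly and kills it at the evaluation point via Definition \ref{defBSbackfld}, while you observe it vanishes because $\tilde\psi_{(*)}$ is by construction the frozen--$\phi$ critical point; these are interchangeable. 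Part (d) matches the paper step for step, and you correctly flag the two genuine subtleties (the Gaussian normalization $b^{-\dim\cH_+}$ on the conjugate contour $\th^*=\bar\th$, $\psi^*=\bar\psi$, and the fact that the translated contour is $\cD(\th_*,\th)$ precisely because $\psi_{*\crt}$ and $\psi_\crt$ need not be complex conjugates). What the paper's operator--identity route buys is reusability --- Remark \ref{remotherrepcheckfQ} and Lemma \ref{lembSabstrlinalg} are recycled in Remark \ref{remBSedA} for the representations of $S$, $\check S$ and $C$; what yours buys is brevity and the avoidance of the $\cO=0$ verification.
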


\noindent The formal power series versions of parts (a), (b) and (c) of
Proposition \ref{propBSconcatbackgr} are
{   \renewcommand{\thetheorem}{\ref{propBSconcatbackgr}'}
\begin{proposition}
Let $\phi_{*\bg},\phi_\bg$ be formal background fields and
$\psi_{*\crt},\psi_\crt$ be formal critical fields with respect to
$\phi_{*\bg},\phi_\bg$.
Set\footnote{We routinely use the ``optional $*$''
notation $\al_{(*)}$ to denote ``$\al_*$ or $\al$''. The equation
``$\al_{(*)}=\be_{(*)}$'' means ``$\al_*=\be_*$ and $\al=\be$''.
}
\begin{equation}
\check\phi_{(*)\cmp}(\th_*,\th) 
= \phi_{(*)\bg}\big(\psi_{*\crt}(\th_*,\th),\psi_{\crt}(\th_*,\th)\big)
\tag{\ref{eqnBScheckphibgde}'}
\end{equation}
\begin{enumerate}[label=(\alph*), leftmargin=*]
\item
$\, \big(\psi_{*\crt}(\th_*,\th), \psi_{\crt}(\th_*,\th)\big)\,$ fulfils the equations
\begin{align*}
\psi_{(*)\crt}(\th_*,\th)
    &={(bQ^* Q+\fQ)}^{-1} 
       \big(bQ^*\th_{(*)} +\fQ\,Q_-\,\check\phi_{(*)\cmp}(\th_*,\th)\big)
\end{align*}

\item The effective action
\begin{align*}
&\cA_\eff\big(\th_*,\th;\psi_{*\crt}(\th_*,\th),\psi_{\crt}(\th_*,\th); 
            \check\phi_{*\cmp}(\th_*,\th),
                       \check\phi_\cmp(\th_*,\th)\big) \\
&\hskip2.5in=\check\cA(\th_*,\th; \check\phi_{*\cmp}(\th_*,\th),
                       \check\phi_\cmp(\th_*,\th))
\end{align*}

\item
$\check\phi_{*\cmp}(\th_*,\th)\,,\, \check\phi_\cmp(\th_*,\th)$ are 
formal next scale background fields.
\end{enumerate}
\end{proposition}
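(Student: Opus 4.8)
The plan is to derive all three parts directly from the three defining systems \eqref{eqnBSbckgndequ}, \eqref{eqnBScritpointequ} and \eqref{eqnBSnsbckgndequ}, now read as identities of formal power series in $(\th_*,\th)$. The only role played by the formal setting is bookkeeping: since $\psi_{*\crt},\psi_\crt$ and $\phi_{*\bg},\phi_\bg$ all have vanishing constant term, the composition $\check\phi_{(*)\cmp}=\phi_{(*)\bg}\big(\psi_{*\crt},\psi_\crt\big)$ is a well-defined formal power series with vanishing constant term, and substituting it, or the pair $(\psi_{*\crt},\psi_\crt)$, into any defining identity again yields a valid formal identity. Thus no convergence or domain question arises and each manipulation below acts separately on every homogeneous piece.

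Part (a) is immediate: the formal critical-field equation \eqref{eqnBScritpointequ} reads $(bQ^*Q+\fQ)\psi_{(*)\crt}=bQ^*\th_{(*)}+\fQ\,Q_-\,\phi_{(*)\bg}(\psi_{*\crt},\psi_\crt)$, and by definition of $\check\phi_{(*)\cmp}$ the last term equals $\fQ\,Q_-\,\check\phi_{(*)\cmp}$; since $bQ^*Q+\fQ$ is strictly positive definite, hence invertible, solving gives the asserted formula. The algebraic heart of (b) and (c) is then one identity. Writing
\[
\xi_{(*)}=\psi_{(*)\crt}-Q_-\check\phi_{(*)\cmp},\qquad
\eta_{(*)}=\th_{(*)}-Q\psi_{(*)\crt},\qquad
\zeta_{(*)}=\th_{(*)}-\Qc\check\phi_{(*)\cmp},
\]
so that $\zeta_{(*)}-\eta_{(*)}=Q\xi_{(*)}$, part (a) rewritten gives $\fQ\,\xi_{(*)}=bQ^*\eta_{(*)}$, whence $Q\xi_{(*)}=bQ\fQ^{-1}Q^*\eta_{(*)}$; feeding this into $\zeta_{(*)}=\eta_{(*)}+Q\xi_{(*)}$ and using the definition \eqref{eqnBSfQrecursion} of $\check\fQ$ produces the key relation $\zeta_{(*)}=b\big(\sfrac{1}{b}\bbbone+Q\fQ^{-1}Q^*\big)\eta_{(*)}=b\,\check\fQ^{-1}\eta_{(*)}$, that is $\eta_{(*)}=\sfrac{1}{b}\check\fQ\,\zeta_{(*)}$.

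For (b) the term $\fA(\check\phi_{*\cmp},\check\phi_\cmp)$ is common to $\cA_\eff$ and $\check\cA$, so only the quadratic forms must be matched; the quadratic part of $\cA_\eff$ at these fields is $b\<\eta_*,\eta\>_+ +\<\xi_*,\fQ\xi\>$, and using $\fQ\xi=bQ^*\eta$ with the adjoint relation $\<Qu,v\>_+=\<u,Q^*v\>$ it telescopes,
\[
b\<\eta_*,\eta\>_+ +\<\xi_*,\fQ\xi\>
= b\<\eta_*+Q\xi_*,\,\eta\>_+
= b\<\zeta_*,\eta\>_+
= \<\zeta_*,\check\fQ\,\zeta\>_+,
\]
exactly the quadratic form of $\check\cA$ at $\check\phi_{(*)\cmp}$. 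For (c) I substitute $(\psi_{*\crt},\psi_\crt)$ into the formal background equation \eqref{eqnBSbckgndequ}; its first line becomes $\nabla_{\check\phi}\fA(\check\phi_{*\cmp},\check\phi_\cmp)=Q_-^*\fQ\,\xi_*=bQ_-^*Q^*\eta_*=b\Qc^*\eta_*$, via $Q_-^*Q^*=\Qc^*$, and the key relation converts the right-hand side into $\Qc^*\check\fQ\,\zeta_*=\Qc^*\check\fQ\,\th_*-\Qc^*\check\fQ\Qc\,\check\phi_{*\cmp}$; rearranging reproduces the first line of \eqref{eqnBSnsbckgndequ}, and the starred companion follows identically from the second lines of \eqref{eqnBSbckgndequ} and \eqref{eqnBScritpointequ}.

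I expect no serious obstacle: the argument is entirely algebraic, and the only load-bearing step is the derivation of $\eta_{(*)}=\sfrac{1}{b}\check\fQ\,\zeta_{(*)}$ from the recursion \eqref{eqnBSfQrecursion}. This is exactly where the definition of $\check\fQ$ is forced, and it is the single identity that simultaneously drives the telescoping in (b) and the conversion of the $Q_-^*$-objects into $\Qc^*$-objects in (c). Conceptually (c) records that eliminating $(\phi_*,\phi)$ and then $(\psi_*,\psi)$ by critical points yields the same $\phi$-component as first eliminating $(\psi_*,\psi)$---which produces $\check\cA$---and then $(\phi_*,\phi)$; I would not route the proof through this commutation, however, since justifying it at the formal level would need invertibility of the relevant Hessians, whereas the direct computation above needs nothing beyond self-adjointness of $\fQ$ and the adjoint relations for $Q,Q_-$.
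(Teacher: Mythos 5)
Your proof is correct, and for parts (b) and (c) it takes a genuinely different route from the paper's. The paper proves these via Lemma \ref{lemBSpreparation}, a statement valid for \emph{arbitrary} $(\phi_*,\phi)$: setting $\tilde\psi_{(*)}=(bQ^*Q+\fQ)^{-1}\big(bQ^*\th_{(*)}+\fQ\,Q_-\phi_{(*)}\big)$, it shows $\check\cA(\th_*,\th;\phi_*,\phi)=\cA_\eff\big(\th_*,\th;\tilde\psi_*,\tilde\psi;\phi_*,\phi\big)$ together with a chain--rule gradient identity \eqref{eqnBSpreparationgrad}; the proof runs through the alternative representation $\check\fQ=b\big[\bbbone-bQ(bQ^*Q+\fQ)^{-1}Q^*\big]$ of Remark \ref{remotherrepcheckfQ} and the inversion Lemma \ref{lembSabstrlinalg}, culminating in an operator identity $\cO=0$. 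Part (b) is then evaluation at $\phi_{(*)}=\check\phi_{(*)\cmp}$ (where $\tilde\psi_{(*)}=\psi_{(*)\crt}$ by part (a), which both you and the paper obtain identically), and part (c) follows because the right-hand side of the gradient identity vanishes by the background and critical field equations. You instead work only at the critical configuration: your key relation $\eta_{(*)}=\frac{1}{b}\check\fQ\,\zeta_{(*)}$, i.e. $\th_{(*)}-Q\psi_{(*)\crt}=\frac{1}{b}\check\fQ\big(\th_{(*)}-\Qc\check\phi_{(*)\cmp}\big)$, is derived directly from the definition \eqref{eqnBSfQrecursion} of $\check\fQ$ and from $\fQ\,\xi_{(*)}=bQ^*\eta_{(*)}$, and it then drives both the telescoping of the quadratic forms for (b) and the conversion $Q_-^*\fQ\,\xi_*=b\Qc^*\eta_*=\Qc^*\check\fQ\,\zeta_*$ for (c), which you obtain by direct substitution into \eqref{eqnBSbckgndequ} rather than through a chain rule. (Your relation is in fact exactly the specialization to the critical point of the paper's computation $\th-Q\tilde\psi=\big[\bbbone-bQ(bQ^*Q+\fQ)^{-1}Q^*\big](\th-\Qc\phi)$ inside Lemma \ref{lemBSpreparation}, but you reach it without ever invoking Remark \ref{remotherrepcheckfQ} or Lemma \ref{lembSabstrlinalg}.) What each approach buys: yours is shorter and self-contained, needing only symmetry of $\fQ$ and the adjoint relations, and is perfectly adequate for Proposition \ref{propBSconcatbackgr}$'$; the paper's lemma is stronger, identifying $\check\cA$ with $\cA_\eff$ along the entire slice $\psi_{(*)}=\tilde\psi_{(*)}(\th_{(*)},\phi_{(*)})$ and making the transfer of criticality automatic via \eqref{eqnBSpreparationgrad}, without re-deriving the next-scale equations by hand. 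Your closing caution is also well placed: routing (c) through commutation of the two eliminations would require Hessian invertibility that the direct computation never needs.
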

\addtocounter{theorem}{-1}
}
\noindent
The proof of these Propositions will be given after Lemma \ref{lemBSpreparation}.

\begin{remark}\label{remBSremarkonbackgroundfields}
\ 
\begin{enumerate}[label=(\alph*), leftmargin=*]
\item
Part (c) of the Proposition is often called the ``composition rule''.

\item
In applications, the domain $\cN_+$ is chosen so that the
second integral on the right hand side of the formula in part (d)
is small. In that integral either $\th$ or $\th_*$ is bounded
away from the origin (``large fields'').

\item 
As in Proposition \ref{propBSconcatbackgr}', let 
$\phi_{*\bg},\phi_\bg$ be formal background fields and 
$\psi_{*\crt},\psi_\crt$ be formal critical fields with respect to
$\phi_{*\bg},\phi_\bg$. Assume, in addition, that the equations
\eqref{eqnBSnsbckgndequ}, for the next scale background fields, 
have a unique formal power series solution, that 
we denote $\check\phi_{*\bg},\check\phi_\bg$. Then by part (c) 
of Proposition \ref{propBSconcatbackgr}',
$
\check\phi_{(*)\bg}(\th_*,\th) = \check\phi_{(*)\cmp}(\th_*,\th)
$
and, by part (a) of Proposition \ref{propBSconcatbackgr}',
\begin{align*}
\psi_{(*)\crt}(\th_*,\th)
    &={(bQ^* Q+\fQ)}^{-1} 
       \big(bQ^*\th_{(*)} +\fQ\,Q_-\,\check\phi_{(*)\bg}(\th_*,\th)\big)
\end{align*}
If, in addition, $\check\phi_{(*)\bg}(\th_*,\th)$ are analytic functions 
on some domain, then so are  $\psi_{(*)\crt}(\th_*,\th)$. 
So to construct analytical critical fields, it suffices to have
\begin{itemize}[leftmargin=*, topsep=2pt, itemsep=0pt, parsep=0pt]
\item 
uniqueness of formal power series solutions 
to the next scale background field equations
\item
existence of analytic solutions to the next scale 
background field equations
\item
formal background fields
\item
formal critical fields with respect to the formal background
fields
\end{itemize}
Lemma \ref{lemBSuniquefps}, below, provides existence and uniqueness
for formal power series solutions of the critical field equations.
\end{enumerate}
\end{remark}

\begin{lemma}\label{lemBSuniquefps}
Let $\phi_{*\bg},\phi_\bg$ be formal background fields of the form
\begin{equation*}
\phi_{(*)\bg}(\psi_*,\psi) = L_{(*)}\psi_{(*)} 
                               +  \phi_{(*)\bg}^{(\ge 2)}(\psi_*,\psi)
\end{equation*}
with $\phi_{(*)\bg}^{(\ge 2)}(\psi_*,\psi)$ being of degree at least 
two\footnote{By this we mean that each nonzero monomial
in $\phi_{(*)\bg}^{(\ge 2)}$ has degree at least two. } 
in $(\psi_*,\psi)$ and with the $L_{(*)}$'s being linear operators.
If the linear operators $bQ^*Q+\fQ - \fQ Q_-L_{(*)}$ are invertible, 
then there exist unique formal critical fields with respect to 
$\phi_{*\bg},\phi_\bg$.
\end{lemma}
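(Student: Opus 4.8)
The plan is to prove Lemma \ref{lemBSuniquefps} by constructing the formal critical fields order by order in the total degree in $(\th_*,\th)$, using the defining equations \eqref{eqnBScritpointequ}. Recall that a formal critical field is a formal power series $\psi_{(*)\crt}(\th_*,\th)$, with vanishing constant term, solving
\begin{equation*}
(bQ^* Q+\fQ)\psi_{(*)}  = bQ^*\th_{(*)} +\fQ\,Q_-\,\phi_{(*)\rm bg}(\psi_*,\psi)
\end{equation*}
with $\phi_{(*)\bg}$ substituted in. Substituting the assumed form $\phi_{(*)\bg}(\psi_*,\psi)=L_{(*)}\psi_{(*)}+\phi_{(*)\bg}^{(\ge 2)}(\psi_*,\psi)$ and moving the linear term to the left, the equation becomes
\begin{equation*}
\big(bQ^* Q+\fQ - \fQ Q_- L_{(*)}\big)\psi_{(*)}
   = bQ^*\th_{(*)} +\fQ\,Q_-\,\phi_{(*)\bg}^{(\ge 2)}(\psi_*,\psi).
\end{equation*}
The invertibility hypothesis is precisely what makes the operator on the left a bijection, so the plan is to invert it and set up a fixed-point/recursion for the coefficients.

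The main work is an induction on the degree $n\ge 1$. First I would write $\psi_{(*)\crt}=\sum_{n\ge 1}\psi_{(*)}^{(n)}$, where $\psi_{(*)}^{(n)}$ collects the homogeneous part of degree $n$ in $(\th_*,\th)$. At degree one, the right hand side contributes only $bQ^*\th_{(*)}$ (since $\phi_{(*)\bg}^{(\ge 2)}$ begins at degree two in $(\psi_*,\psi)$, hence at degree at least two in $(\th_*,\th)$ once we substitute a power series with vanishing constant term), so
\begin{equation*}
\psi_{(*)}^{(1)} = \big(bQ^* Q+\fQ - \fQ Q_- L_{(*)}\big)^{-1}\,bQ^*\th_{(*)},
\end{equation*}
which is uniquely determined. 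For the inductive step, I would observe that the degree-$n$ part of the right hand side, namely the degree-$n$ part of $\fQ Q_-\phi_{(*)\bg}^{(\ge 2)}(\psi_*,\psi)$ evaluated at $\psi_{(*)}=\psi_{(*)\crt}$, depends only on the monomials $\psi_{(*)}^{(m)}$ with $m<n$. This is the crucial point: because $\phi_{(*)\bg}^{(\ge 2)}$ has no terms of degree $0$ or $1$ in $(\psi_*,\psi)$, any monomial in it of total degree $k\ge 2$ produces, after substituting $\psi_{(*)}=\sum_{m\ge 1}\psi_{(*)}^{(m)}$, terms of degree at least $k\ge 2$, and the degree-$n$ contribution can only involve the $\psi_{(*)}^{(m)}$ with $m\le n-1$. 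Hence the degree-$n$ equation reads
\begin{equation*}
\big(bQ^* Q+\fQ - \fQ Q_- L_{(*)}\big)\psi_{(*)}^{(n)}
   = \big[\fQ\,Q_-\,\phi_{(*)\bg}^{(\ge 2)}(\psi_*,\psi)\big]^{(n)}
     \Big|_{\psi_{(*)}=\psi_{(*)\crt}},
\end{equation*}
whose right hand side is already known from lower orders, and applying the inverse operator determines $\psi_{(*)}^{(n)}$ uniquely.

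The step I expect to require the most care is the bookkeeping that the degree-$n$ part of the substituted series $\phi_{(*)\bg}^{(\ge 2)}(\psi_{*\crt},\psi_{\crt})$ genuinely involves only the lower-order coefficients $\psi_{(*)}^{(m)}$, $m<n$, so that the recursion is well founded and triangular. This is a standard but slightly delicate degree-counting argument for composition of formal power series: one must track that substituting a series with vanishing constant term into a series with vanishing constant and linear terms strictly raises degrees in the right way. Once this triangularity is in hand, existence and uniqueness follow simultaneously and automatically from the invertibility of $bQ^* Q+\fQ - \fQ Q_- L_{(*)}$, since at each order the coefficient is both forced (uniqueness) and constructible (existence) by a single application of the inverse. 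I would close by remarking that the two equations, for $\psi_{*\crt}$ and $\psi_{\crt}$, decouple at each order in exactly this manner, so the argument applies verbatim to both members of the optional-$*$ pair.
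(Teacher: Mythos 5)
Your proposal is correct and takes essentially the same approach as the paper: you absorb the linear term $\fQ\,Q_-L_{(*)}\psi_{(*)}$ into the left-hand side of \eqref{eqnBScritpointequ} and then solve by a triangular recursion in the degree, which is precisely the ``lower triangular'' recursion the paper's proof invokes in one line. Your degree-by-degree induction simply spells out the bookkeeping the paper leaves implicit.
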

\begin{proof}
 Rewrite the equations \eqref{eqnBScritpointequ} in the form
\begin{align*}
(bQ^* Q+\fQ- \fQ Q_-L_*)\psi_*  &= bQ^*\th_* 
    +\fQ\,Q_-\,\phi_{*\rm bg}^{(\ge 2)}(\psi_*,\psi)\\
(bQ^* Q+\fQ- \fQ Q_-L)\psi  &= bQ^*\th 
    +\fQ\,Q_-\,\phi_{\rm bg}^{(\ge 2)}(\psi_*,\psi)
\end{align*}
As $\psi_*$ and $\psi$ are to have vanishing constant terms, this provides
a ``lower triangular'' recursion relation for the coefficients of
$(\psi_*,\psi)$. As $\cH$ and $\cH_+$ are finite dimensional,
this recursion relation trivially generates a unique solution.
\end{proof}

The proof of Proposition \ref{propBSconcatbackgr} is based on 

\begin{lemma}\label{lemBSpreparation}
For $ \phi_*,\phi \in \cH_-$ and  $\th_*,\th \in \cH_+$ set
\begin{equation*}
\tilde\psi_{(*)}(\th_{(*)},\phi_{(*)})={(bQ^* Q+\fQ)}^{-1} 
                               \big(bQ^*\th_{(*)} +\fQ\,Q_-\,\phi_{(*)}\big)
\end{equation*}
Then
$
\check\cA\big( \th_*,\th;\,\phi_*,\phi\big)
=\cA_\eff\big(\th_*,\th; \tilde\psi_*(\th_*,\phi_*),\tilde\psi(\th,\phi);\,\phi_*,\phi\big)
$
and
\begin{equation}\label{eqnBSpreparationgrad}
\begin{split}
&(\nabla_{\phi_{(*)}} \check\cA)(\th_*,\th;\phi_*,\phi) \\
&\hskip0.5in=(\nabla_{\phi_{(*)}} \cA)\big(\tilde\psi_*(\th_*,\phi_*),\tilde\psi(\th,\phi);\phi_*,\phi\big) \\
&\hskip1in + Q_-^*\fQ\,(bQ^* Q+\fQ)^{-1}
 \big[ (\nabla_{\psi_{(*)}} \cA_\eff)\big(\th_*,\th; \tilde\psi_*(\th_*,\phi_*),\tilde\psi(\th,\phi);\,\phi_*,\phi\big)\big]
\end{split}
\end{equation}
\end{lemma}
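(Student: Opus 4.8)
The plan is to first recognize $\bigl(\tilde\psi_*(\th_*,\phi_*),\tilde\psi(\th,\phi)\bigr)$ as the unique critical point of the map $(\psi_*,\psi)\mapsto\cA_\eff(\th_*,\th;\psi_*,\psi;\phi_*,\phi)$ at fixed $\th_*,\th,\phi_*,\phi$. Using that the forms and $\fQ$ are symmetric (so $Q^*,Q_-^*$ and the self-adjointness of $\fQ$ and of $bQ^*Q+\fQ$ are available), a direct computation gives $\nabla_\psi\cA_\eff=(bQ^*Q+\fQ)\psi_*-bQ^*\th_*-\fQ Q_-\phi_*$ and $\nabla_{\psi_*}\cA_\eff=(bQ^*Q+\fQ)\psi-bQ^*\th-\fQ Q_-\phi$. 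Setting $\nabla_\psi\cA_\eff=0$ yields exactly the equation defining $\tilde\psi_*$, and $\nabla_{\psi_*}\cA_\eff=0$ the one defining $\tilde\psi$. This both explains the definition of $\tilde\psi_{(*)}$ and sets up the substitution in the first claimed identity.

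For the identity $\check\cA(\th_*,\th;\phi_*,\phi)=\cA_\eff\bigl(\th_*,\th;\tilde\psi_*,\tilde\psi;\phi_*,\phi\bigr)$ I would complete the square. As a function of $(\psi_*,\psi)$, $\cA_\eff$ is the sum of the bilinear term $\langle\psi_*,(bQ^*Q+\fQ)\psi\rangle$, terms linear in $\psi_*$ and in $\psi$, and a $(\psi_*,\psi)$-independent constant; since $bQ^*Q+\fQ$ is symmetric it rewrites as $\langle\psi_*-\tilde\psi_*,(bQ^*Q+\fQ)(\psi-\tilde\psi)\rangle$ plus a term independent of $(\psi_*,\psi)$. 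Evaluating at $\psi_{(*)}=\tilde\psi_{(*)}$ annihilates the square, and since the $\fA(\phi_*,\phi)$ contributions on the two sides already agree, the claim reduces to the purely algebraic identity
\[
b\langle\th_*,\th\rangle_+ +\langle Q_-\phi_*,\fQ Q_-\phi\rangle-\langle\tilde\psi_*,(bQ^*Q+\fQ)\tilde\psi\rangle=\langle\th_*-QQ_-\phi_*\,,\,\check\fQ(\th-QQ_-\phi)\rangle_+ .
\]
This is the step I expect to be the main obstacle, in the sense that all the bookkeeping lives here. I would prove it by inserting the second representation $\check\fQ=b\bbbone_{\cH_+}-b^2Q(bQ^*Q+\fQ)^{-1}Q^*$ from Remark \ref{remotherrepcheckfQ}, substituting $\tilde\psi_{(*)}=(bQ^*Q+\fQ)^{-1}(bQ^*\th_{(*)}+\fQ Q_-\phi_{(*)})$, and repeatedly using $bQ^*Q=(bQ^*Q+\fQ)-\fQ$ to telescope: the $\th_*$--$\th$ terms match at once, the mixed $\th_*$--$\phi$ terms collapse to $-b\langle Q^*\th_*,(bQ^*Q+\fQ)^{-1}\fQ Q_-\phi\rangle$ on both sides (and symmetrically for $\phi_*$--$\th$), and the $\phi_*$--$\phi$ terms collapse to $\langle Q_-\phi_*,\fQ Q_-\phi\rangle-\langle\fQ Q_-\phi_*,(bQ^*Q+\fQ)^{-1}\fQ Q_-\phi\rangle$ on both sides. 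Alternatively, this identity is an instance of the linear-algebra Lemma \ref{lembSabstrlinalg} already used to prove Remark \ref{remotherrepcheckfQ}.

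The gradient formula \eqref{eqnBSpreparationgrad} then follows from this action identity by the chain rule, with no further square-completing. Differentiating $\check\cA(\th_*,\th;\phi_*,\phi)=\cA_\eff\bigl(\th_*,\th;\tilde\psi_*(\th_*,\phi_*),\tilde\psi(\th,\phi);\phi_*,\phi\bigr)$ in $\phi$, only the last slot and the factor $\tilde\psi(\th,\phi)$ carry $\phi$-dependence, since $\tilde\psi_*$ depends on $\phi_*$ only. Because $\cA_\eff-\cA$ is independent of $(\phi_*,\phi)$, the direct contribution is $(\nabla_\phi\cA)(\tilde\psi_*,\tilde\psi;\phi_*,\phi)$; and since $\tilde\psi(\th,\phi)$ has $\phi$-derivative $(bQ^*Q+\fQ)^{-1}\fQ Q_-$, whose adjoint is $Q_-^*\fQ(bQ^*Q+\fQ)^{-1}$, the chain-rule term is exactly $Q_-^*\fQ(bQ^*Q+\fQ)^{-1}(\nabla_\psi\cA_\eff)(\th_*,\th;\tilde\psi_*,\tilde\psi;\phi_*,\phi)$. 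This is \eqref{eqnBSpreparationgrad} for $\phi$; the $\phi_*$ case is identical with the roles of $\psi$ and $\psi_*$ interchanged, which is precisely what the optional-$*$ notation encodes. Note that criticality of $\tilde\psi_{(*)}$ is not actually used in this last step—the gradient identity is a formal consequence of the action identity—so the only genuine content beyond the chain rule is the algebraic identity of the previous paragraph.
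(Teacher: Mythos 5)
Your proposal is correct, but for the main identity $\check\cA(\th_*,\th;\phi_*,\phi)=\cA_\eff\big(\th_*,\th;\tilde\psi_*,\tilde\psi;\phi_*,\phi\big)$ it takes a genuinely different route from the paper. The paper never completes the square: it computes the two residuals directly, namely $\th-Q\tilde\psi=\big[\bbbone-bQ(bQ^*Q+\fQ)^{-1}Q^*\big]\big(\th-\Qc\,\phi\big)$ and $\tilde\psi-Q_-\phi=b(bQ^*Q+\fQ)^{-1}Q^*\big(\th-\Qc\,\phi\big)$, so that the difference $\check\cA-\cA_\eff$ organizes itself as a single quadratic form $b\big<\th_*-\Qc\,\phi_*\,,\,\cO\big(\th-\Qc\,\phi\big)\big>_+$, and then one operator computation, using Remark \ref{remotherrepcheckfQ}, shows $\cO=0$. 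Your version instead completes the square in $(\psi_*,\psi)$, evaluates at the minimizer, and reduces the claim to the constant-term identity $b\<\th_*,\th\>_+ +\<Q_-\phi_*,\fQ Q_-\phi\>-\<\tilde\psi_*,(bQ^*Q+\fQ)\tilde\psi\>=\<\th_*-\Qc\phi_*,\check\fQ(\th-\Qc\phi)\>_+$, verified in three blocks by telescoping with $bQ^*Q=(bQ^*Q+\fQ)-\fQ$; I checked that each block collapses as you assert, so the argument is complete, though it is bookkeeping-heavier than the paper's single factored computation. What your route buys is conceptual: it exhibits $\tilde\psi_{(*)}$ as the critical point of $(\psi_*,\psi)\mapsto\cA_\eff(\th_*,\th;\psi_*,\psi;\phi_*,\phi)$ at frozen $\phi_{(*)}$, which is precisely how the lemma is deployed in Proposition \ref{propBSconcatbackgr} (compare \eqref{eqnBScritpointequ}), whereas the paper's proof is bare algebra. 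One loose side remark: your claim that the constant-term identity ``is an instance of Lemma \ref{lembSabstrlinalg}'' is not literally right --- it is a quadratic-form identity, and follows from that lemma only via Remark \ref{remotherrepcheckfQ} together with the same telescoping you already perform --- but nothing rests on it. The gradient half of your argument (chain rule; $\nabla_{\phi_{(*)}}\cA_\eff=\nabla_{\phi_{(*)}}\cA$ since $\cA_\eff-\cA$ is $(\phi_*,\phi)$-independent; the adjoint of $(bQ^*Q+\fQ)^{-1}\fQ\,Q_-$ being $Q_-^*\fQ\,(bQ^*Q+\fQ)^{-1}$ by symmetry of $\fQ$ and $bQ^*Q+\fQ$) coincides with the paper's, and your observation that criticality of $\tilde\psi_{(*)}$ is not used in that step is accurate.
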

\begin{proof}  
With the abbreviation $\,\tilde \psi_{(*)} = 
          \tilde\psi_{(*)}(\th_{(*)},\phi_{(*)})$
\begin{align*}
\th-Q\tilde\psi
&=\th - Q{(bQ^* Q+\fQ)}^{-1}\big(bQ^*\th +\fQ\,Q_-\,\phi\big)\\
&= \big[\bbbone-bQ{(bQ^* Q+\fQ)}^{-1}Q^*\big]\th -\Qc\,\phi
            +QQ_-\,\phi-Q{(bQ^* Q+\fQ)}^{-1}\fQ\,Q_-\,\phi\\
&= \big[\bbbone-bQ{(bQ^* Q+\fQ)}^{-1}Q^*\big]\th -\Qc\,\phi\\
&\hskip1in       +Q{(bQ^* Q+\fQ)}^{-1}
                  \big[(bQ^* Q+\fQ)-\fQ\big]Q_-\,\phi\\
&= \big[\bbbone-bQ{(bQ^* Q+\fQ)}^{-1}Q^*\big]
         \big(\th -\Qc\,\phi\big)\\
%%%
 \tilde\psi-Q_-\,\phi
 &=  {(bQ^* Q+\fQ)}^{-1}\big(bQ^*\th +\fQ\,Q_-\,\phi\big)
                    -Q_-\,\phi\\
 &=  {(bQ^* Q+\fQ)}^{-1}\big(bQ^*\th +\fQ\,Q_-\,\phi
                    -bQ^* QQ_-\,\phi-\fQ\,Q_-\,\phi\big)\\
 &=  b{(bQ^* Q+\fQ)}^{-1}Q^*\big(\th 
                    -\Qc\,\phi\big)
\end{align*}
Therefore
\begin{align*}
&\check\cA\big( \th_*,\th;\,\phi_*,\phi\big)
-\cA_\eff\big(\th_*,\th; \tilde\psi_*,\tilde\psi;\,\phi_*,\phi\big) \\
%%%%
&\hskip 0.75cm
= \big< \th_*-\Qc\,\phi_*\,,\,\check\fQ
                      \big( \th-\Qc\,\phi\big) \big>_+
-b \big< \th_*-Q\tilde\psi_*\,,\, \th-Q\tilde\psi \big>_+\\
&\hskip 7cm
- \big< \tilde\psi_*-Q_-\,\phi_*\,,\, \fQ(\tilde\psi-Q_-\,\phi) \big>\\
%%%%
&\hskip 0.75cm
= b\big< \th_*-\Qc\,\phi_*\,,\, \cO  \big( \th-\Qc\,\phi\big) \big>_+ 
\end{align*}
where, by Remark \ref{remotherrepcheckfQ},
\begin{align*}
\cO&= \big[ \bbbone -bQ{\big(bQ^* Q+\fQ\big)}^{-1}Q^*\big]
- \big[\bbbone-bQ{(bQ^* Q+\fQ)}^{-1}Q^*\big]^2\cr&\hskip0.5in
-bQ{(bQ^* Q+\fQ)}^{-1}\fQ{(bQ^* Q+\fQ)}^{-1}Q^*
\\
&= b\big[ \bbbone -bQ{\big(bQ^* Q+\fQ\big)}^{-1}Q^*\big]
         Q{(bQ^* Q+\fQ)}^{-1}Q^*\cr&\hskip0.5in
-bQ{(bQ^* Q+\fQ)}^{-1}\fQ{(bQ^* Q+\fQ)}^{-1}Q^*
\\
&= bQ\big[\bbbone -{\big(bQ^* Q+\fQ\big)}^{-1}bQ^* Q
         -{(bQ^* Q+\fQ)}^{-1}\fQ\big]
                 {(bQ^* Q+\fQ)}^{-1}Q^*\\
&=0
\end{align*}
This proves the first statement. The second follows by the chain rule 
and the observation that 
$\,\nabla_{\phi_{(*)}} \cA_\eff =\nabla_{\phi_{(*)}} \cA\,$.
\end{proof}

\begin{proof}[Proof of Propositions \ref{propBSconcatbackgr} 
            and \ref{propBSconcatbackgr}']
The proof of Proposition \ref{propBSconcatbackgr}' is virtually identical
to that of Proposition \ref{propBSconcatbackgr}.a,b,c, so we just give the 
proof of Proposition \ref{propBSconcatbackgr}.
Part (a) follows immediately from \eqref{eqnBScritpointequ} and 
\eqref{eqnBScheckphibgde}.
Now evaluate the conclusions of  Lemma \ref{lemBSpreparation} at 
$\,\phi_{(*)}= \check\phi_{(*)\cmp}(\th_*,\th)\big)\,$. 
The formula for $\check\cA$ in Lemma \ref{lemBSpreparation} directly gives 
part (b). 
The right hand side of \eqref{eqnBSpreparationgrad} vanishes upon this evaluation
by parts (a) and (b) of Definition \ref{defBSbackfld}. 
This shows that $\,\big(\check\phi_{*\cmp}(\th_*,\th)\,,\, \check\phi_\cmp(\th_*,\th)\big)\,$ is critical for the map 
$
\,(\phi_*,\phi) \mapsto \check\cA\big( \th_*,\th;\,\phi_*,\phi\big)\,
$, which proves part (c). 
Now
\begin{align*}
&b^{-\dim\cH_+}\int_{\cN\times\cN} \! d\mu_{\cHu}(\psi^*,\psi) \,e^{
     -\cA(\psi^*,\psi; \phi_{*\bg}(\psi^*,\psi),\phi_\bg(\psi^*,\psi))
                        +\cE(\psi^*,\psi) }\\
&\hskip0.1in=\int\! d\mu_{\cH_+}(\th^*,\th)
              \!\int_{\cN\times\cN} \hskip -15pt
                              d\mu_{\cHu}(\psi^*,\psi) \,
   e^{- b \< \th^*-Q\psi^*\,,\, \th-Q\psi \>_+ 
  -\cA(\psi^*,\psi;  \phi_{*\bg}(\psi^*,\psi),\phi_\bg(\psi^*,\psi))
                       \,+\,\cE(\psi^*,\psi) }\\
&\hskip0.1in=\int d\mu_{\cH_+}(\th^*,\th)
       \int_{\cN\times\cN} \hskip -12pt d\mu_{\cHu}(\psi^*,\psi) \,
   e^{- \cA_\eff(\th^*,\th;\psi^*,\psi;  \phi_{*\bg}(\psi^*,\psi),
                                               \phi_\bg(\psi^*,\psi)) 
                         + \cE(\psi^*,\psi) }\\
&\hskip0.1in= \int_{\cN_+\times\cN_+}  d\mu_{\cH_+}(\th^*,\th)
      \int_{\cN\times\cN} \hskip -12pt d\mu_{\cHu}(\psi^*,\psi) \,
   e^{- \cA_\eff(\th^*,\th;\psi^*,\psi;  \phi_{*\bg}(\psi^*,\psi),
                                               \phi_\bg(\psi^*,\psi)) 
                         + \cE(\psi^*,\psi) }
                       \\
&\hskip0.2in+\int_{\cH_+\times\cH_+\setminus \cN_+\times\cN_+}
\hskip -30pt d\mu_{\cH_+}(\th^*,\th)
       \int_{\cN\times\cN} \hskip -10pt d\mu_{\cHu}(\psi^*,\psi) \,
   e^{- \cA_\eff(\th^*,\th;\psi^*,\psi;  \phi_{*\bg}(\psi^*,\psi),
                                               \phi_\bg(\psi^*,\psi)) 
                         \,+\, \cE(\psi^*,\psi) }
\end{align*}
Making the change of variables $\psi^*=\psi_{*\crt}(\th^*,\th)+\de\psi_*$,
$\psi=\psi_\crt(\th^*,\th)+\de\psi$ in the inner integral of the upper line
and applying part (b) gives part (d).
\end{proof}

From now on we assume that the function $\fA(\phi_*,\phi)$ in the
definitions of $\cA$ and $\check\cA$ is of the form
\begin{equation}\label{eqnBSpolyAction}
\fA(\phi_*,\phi)=\<\phi_*,D\phi\>_-+P(\phi_*,\phi)
\end{equation}
where
\begin{itemize}[leftmargin=*, topsep=2pt, itemsep=0pt, parsep=0pt]
\item
$P$ is a polynomial whose nonzero monomials are each of 
degree at least two and
\item
$D$ a linear operator  on $\cH_-$ such that both
the operators $\,(D+Q_-^* \fQ\,Q_- )\,$ and 
$\,(D+\Qc^* \check \fQ\,\Qc)\,$ are invertible. We define the ``Green's
functions''
\begin{equation}\label{eqnBSdefinitionScheckS}
\Su=(D+Q_-^* \fQ\,Q_-)^{-1} \qquad \qquad \Sc=(D+\Qc^* \check \fQ\,\Qc)^{-1}
\end{equation}
\end{itemize}
We think of $D$ as a differential operator, possibly shifted
by a chemical potential.

\begin{remark}\label{remBSremarkonbackgroundfieldsB}
In this setting, the background field equations \eqref{eqnBSbckgndequ} become
\begin{equation}
\phi_{(*)}  = S^{(*)} Q_-^* \fQ \psi_{(*)} - S^{(*)}P'_{(*)} (\phi_*,\phi)
\tag{\ref{eqnBSbckgndequ}'}
\end{equation}
where $P'_*(\phi_*,\phi)= \nabla_{\phi}P (\phi_*,\phi)$ and
$P'(\phi_*,\phi)= \nabla_{\phi_*}P (\phi_*,\phi)$. Similarly, the next 
scale background field equations \eqref{eqnBSnsbckgndequ} become
\begin{equation}
\check\phi_{(*)} = \check S^{(*)}\Qc^* \check \fQ\, \th_{(*)} - 
               \check S^{(*)}P'_{(*)} (\check\phi_*,\check\phi)
\tag{\ref{eqnBSnsbckgndequ}'}
\end{equation}
\end{remark}

We now continue with our study of the critical field, following the
plan of Remark \ref{remBSremarkonbackgroundfields}.c.
To describe the leading part of the critical field,  we set
\begin{equation}\label{eqnBSdefDe}
\De = \fQ -\fQ\,Q_- \Su Q_-^* \fQ :\ \cH \longrightarrow \cH
\end{equation}
From now on we assume that $\,\De + bQ^* Q\,$ is invertible and 
define\footnote{We shall show, in Lemma \ref{lemBSdeltaAalernew},
below, that $C$ is the covariance for the fluctuation integral.} 
the ``covariance''
\begin{equation}\label{eqnBSdefCascovariance}
C=(\De + bQ^* Q)^{-1}:\ \cH \longrightarrow \cH
\end{equation}

\begin{proposition}\label{propFormalFldSlns}
Assume that in the setting \eqref{eqnBSpolyAction}, each
nonzero monomial of $P$ is of  degree at least three.
Then there exist unique formal background
fields $\phi_{(*)\bg}$ and unique formal next scale background fields
$\check\phi_{(*)\bg}$. They are of the form
\begin{align*}
\phi_{(*)\bg}(\psi_*,\psi)& = S^{(*)} Q_-^* \fQ \psi_{(*)} 
                   + \phi_{(*)\bg}^{(\ge 2)}(\psi_*,\psi) \\
\check\phi_{(*)\bg}(\th_*,\th)& = 
            \check S^{(*)} \check Q_-^* \check \fQ \th_{(*)} 
                   + \check\phi_{(*)\bg}^{(\ge 2)}(\th_*,\th)
\end{align*}
with $\phi_{(*)\bg}^{(\ge 2)}(\psi_*,\psi)$ and 
$\check\phi_{(*)\bg}^{(\ge 2)}(\th_*,\th)$ being of degree at least two.
Furthermore, there are unique formal critical fields with respect to
$\phi_{(*)\bg}$. They are of the form
\begin{align*}
\psi_{(*)\crt}(\th_*,\th)
    &={(bQ^* Q+\fQ)}^{-1} 
       \big(bQ^*\th_{(*)} +\fQ\,Q_-\,\check\phi_{(*)\bg}(\th_*,\th)\big)\\
    &=  b C^{(*)} Q^*\,\th_{(*)} +
            \psi_{(*)\crt}^{(\ge 2)}(\th_*,\th)\big)
\end{align*}
with $\psi_{(*)\crt}^{(\ge 2)}$ being of degree at least two.
\end{proposition}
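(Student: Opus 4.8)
The plan is to construct all three families of fields as formal power series, solving each of the relevant equations degree by degree, and then to reduce the critical-field claim to Lemma~\ref{lemBSuniquefps}. The engine throughout is that, because every monomial of $P$ has degree at least three, each gradient $P'_{(*)}$ has every monomial of degree at least two, so the homogeneous components of the unknowns are governed by a lower-triangular recursion.

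First I would treat the background fields. Using the form $(\ref{eqnBSbckgndequ}')$ of the background field equation, namely $\phi_{(*)}=S^{(*)}Q_-^*\fQ\,\psi_{(*)}-S^{(*)}P'_{(*)}(\phi_*,\phi)$, I would expand $\phi_{(*)\bg}=\sum_{k\ge 1}\phi_{(*)\bg}^{(k)}$ into homogeneous pieces in $(\psi_*,\psi)$. At degree one the nonlinear term drops out and yields $\phi_{(*)\bg}^{(1)}=S^{(*)}Q_-^*\fQ\,\psi_{(*)}$, which is exactly the claimed leading term; at each degree $k\ge 2$ the right-hand side involves $P'_{(*)}$ evaluated only on the lower-order components $\phi^{(j)}$ with $j<k$, so $\phi_{(*)\bg}^{(k)}$ is determined explicitly by data already constructed. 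Exactly as in the proof of Lemma~\ref{lemBSuniquefps}, this lower-triangular recursion generates a unique formal power series of the stated form. The next scale background fields are handled verbatim: equation $(\ref{eqnBSnsbckgndequ}')$ has the same shape with $S^{(*)},Q_-^*\fQ,\psi_{(*)}$ replaced by $\check S^{(*)},\Qc^*\check\fQ,\th_{(*)}$, so the same argument produces a unique $\check\phi_{(*)\bg}$ with leading term $\check S^{(*)}\Qc^*\check\fQ\,\th_{(*)}$.

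For the critical fields I would invoke Lemma~\ref{lemBSuniquefps} with $L_{(*)}=S^{(*)}Q_-^*\fQ$, the linear part just found, and verify its hypothesis, the invertibility of $bQ^*Q+\fQ-\fQ Q_-L_{(*)}=bQ^*Q+\big(\fQ-\fQ Q_-S^{(*)}Q_-^*\fQ\big)$. For the unstarred choice this operator is precisely $\De+bQ^*Q=C^{-1}$ by \eqref{eqnBSdefDe}--\eqref{eqnBSdefCascovariance}, hence invertible by hypothesis. The starred choice is the one place needing a small observation: since $\fQ$ and $Q_-^*\fQ Q_-$ are symmetric, $S_*=(D^*+Q_-^*\fQ Q_-)^{-1}=S^*$, so $\fQ-\fQ Q_-S_*Q_-^*\fQ=\De^*$ and the operator becomes $(\De+bQ^*Q)^*=(C^*)^{-1}$, again invertible. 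Thus Lemma~\ref{lemBSuniquefps} applies and gives unique formal critical fields; by uniqueness of the next scale background fields (Step~2) together with part~(c) of Proposition~\ref{propBSconcatbackgr}', the concatenation $\check\phi_{(*)\cmp}$ coincides with $\check\phi_{(*)\bg}$, so part~(a) of Proposition~\ref{propBSconcatbackgr}' yields the first displayed formula for $\psi_{(*)\crt}$ (this is exactly the reduction prepared in Remark~\ref{remBSremarkonbackgroundfields}.c).

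Finally I would read off the leading term $bC^{(*)}Q^*\th_{(*)}$ by extracting the degree-one part of the critical field equation \eqref{eqnBScritpointequ}. Substituting the linear part $L_{(*)}\psi_{(*)}^{(1)}$ of the background field gives $(bQ^*Q+\fQ)\psi_{(*)}^{(1)}=bQ^*\th_{(*)}+\fQ Q_-S^{(*)}Q_-^*\fQ\,\psi_{(*)}^{(1)}$, that is $\big(bQ^*Q+\fQ-\fQ Q_-S^{(*)}Q_-^*\fQ\big)\psi_{(*)}^{(1)}=bQ^*\th_{(*)}$, whence $\psi_{(*)\crt}^{(1)}=bC^{(*)}Q^*\th_{(*)}$ after the same identifications $\De^*=\fQ-\fQ Q_-S_*Q_-^*\fQ$ and $C^*=(\De^*+bQ^*Q)^{-1}$ used above. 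The only genuine obstacle is this invertibility bookkeeping for the starred operators, which rests on the adjoint relation $S_*=S^*$ and hence $C_*=C^*$; everything else is the routine degree-counting already exploited in Lemma~\ref{lemBSuniquefps}.
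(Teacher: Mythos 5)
Your proposal is correct and takes essentially the same route as the paper: the same lower-triangular, degree-by-degree recursion of Lemma~\ref{lemBSuniquefps} for the background and next scale background fields, the same appeal to Lemma~\ref{lemBSuniquefps} plus parts (a) and (c) of Proposition~\ref{propBSconcatbackgr}' for the critical fields and their first representation, and a degree-one extraction that is equivalent to the paper's rewriting $\psi_{(*)}=bC^{(*)}Q^*\th_{(*)}+C^{(*)}\fQ\,Q_-\,\phi_{(*)\bg}^{(\ge 2)}(\psi_*,\psi)$ for the second. Your explicit verification of the invertibility hypothesis of Lemma~\ref{lemBSuniquefps} via $S_*=S^*$, so that $bQ^*Q+\fQ-\fQ\,Q_-S_*Q_-^*\fQ=(\De+bQ^*Q)^*$ is invertible along with $\De+bQ^*Q$, is a welcome piece of bookkeeping that the paper leaves implicit in its standing assumption that $\De+bQ^*Q$ is invertible.
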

\begin{proof} 
The existence, uniqueness and forms of the formal background and 
next scale background fields are proven as Lemma \ref{lemBSuniquefps} 
was proven.
The existence and uniqueness of the formal critical field now follows from
Lemma \ref{lemBSuniquefps}. The first representation of the critical fields 
follows from parts (a) and (c) of Proposition \ref{propBSconcatbackgr}'.
For the second representation, rewrite the equations \eqref{eqnBScritpointequ}
as
\begin{align*}
(bQ^* Q+\fQ)\psi_{(*)}  &= bQ^*\th_{(*)} 
                +\fQ\,Q_-\,S^{(*)} Q_-^* \fQ \psi_{(*)} 
                   + \fQ\,Q_-\,\phi_{(*)\bg}^{(\ge 2)}(\psi_*,\psi)
\end{align*}
%or
%\begin{align*}
%(bQ^* Q+\fQ-\fQ\,Q_-\,S^{(*)} Q_-^* \fQ)\psi_{(*)}  &= bQ^*\th_{(*)} 
%                   + \fQ\,Q_-\,\phi_{(*)\bg}^{(\ge 2)}(\psi_*,\psi)
%\end{align*}
or
\begin{align*}
 \psi_{(*)}  &= bC^{(*)}Q^*\th_{(*)} 
                   + C^{(*)}\fQ\,Q_-\,\phi_{(*)\bg}^{(\ge 2)}(\psi_*,\psi)
\end{align*}

\end{proof}

The two representations of the critical field, $\psi_{\crt}$,
given in Proposition \ref{propFormalFldSlns}, combined with the
representation of $\check\phi_{\bg}$, suggest a formula for $bCQ^*$.
In Remark \ref{remBSedA}, below, we give an algebraic proof of this 
formula, together with a number of representations for the
Green's functions, $S$ and $\check S$, and covariance $C$.
Then, in Lemma \ref{lemBSdeltaAalernew} below, we analyze the  fluctuation
integral of Proposition \ref{propBSconcatbackgr}.d  in more detail.

\begin{remark}\label{remBSedA}
Assume that $D$ is invertible.
\begin{enumerate}[label=(\alph*), leftmargin=*]
\item
$
\De = \big( \bbbone_{\cHu} +\fQ\,Q_- D^{-1} Q_-^* \big)^{-1}\fQ
    = \fQ\big( \bbbone_{\cHu} +Q_- D^{-1} Q_-^*\,\fQ \big)^{-1}
$
\item 
Let $\,R: \cH_-\rightarrow \cH\,$ and $\,R_*: \cH\rightarrow \cH_-\,$ be linear maps such that
$\,R\,D^{-1} R_* = Q_-D^{-1}Q_-^*\,$
and such that $\,D+R_*\fQ \,R\,$ is invertible.
Then
\begin{equation*}
[D+R_*\fQ \,R]^{-1} = D^{-1} -D^{-1} R_* \De\, R\, D^{-1}
\end{equation*}
In particular
\begin{equation*}
S=D^{-1} -D^{-1} Q_-^* \De\, Q_- D^{-1}
\end{equation*}

\item 
$
\Sc= \big[ S^{-1} -  Q_-^*\fQ (\fQ+bQ^*Q)^{-1}  \fQ Q_-\big]^{-1}
 = \Su + \Su Q_-^*\, \fQ\, C\, \fQ\, Q_- \Su
$
\item
$
C =  \big(bQ^* Q+\fQ\big)^{-1} 
 +(bQ^* Q+\fQ)^{-1}\, \fQ Q_-\check S Q_-^*\fQ\,(bQ^*Q+\fQ)^{-1} 
$ 
\item 
$
bC^{(*)}Q^* 
=  \big(bQ^* Q+\fQ\big)^{-1} 
\Big[bQ^* +\fQ Q_-\check S^{(*)} \check Q_-^* \check \fQ\Big] 
$
\end{enumerate}
\end{remark}
\begin{proof} (a) By Lemma \ref{lembSabstrlinalg}, with
$V=\cH_-$, $W=\cH$, $q=Q_-$, $q_*=Q_-^*$, $f=D$
and $g=\fQ$
\begin{alignat*}{3}
\big\{ \bbbone +\fQ\,Q_- D^{-1} Q_-^* \big\}^{-1}\fQ
&= \big\{ \bbbone - \fQ\,Q_-( D+Q_-^* \fQ Q_-)^{-1}Q_-^*\big\} \fQ
&\,=\De \\
\fQ\big\{ \bbbone +Q_- D^{-1} Q_-^*\,\fQ \big\}^{-1}
&= \fQ\big\{ \bbbone - Q_-( D+Q_-^* \fQ Q_-)^{-1}Q_-^*\,\fQ\big\} 
&\,=\De 
\end{alignat*}

\Item (b)  By part (a)
\begin{align*}
\big[ D+ R_*\fQ \,R\big]\,\big[ D^{-1} -D^{-1} R_* \De\, R\, D^{-1} \big]
&= \bbbone + R_* \big[ \fQ - (\bbbone +\fQ\, R\,D^{-1}R_* )\,\De\big]\,R\,D^{-1}
\\
&= \bbbone + R_* \big[ \fQ - (\bbbone +\fQ\, Q_-D^{-1}Q_-^* )\,\De\big]\,R\,D^{-1}
\\
&=\bbbone
\end{align*}

\Item (c)   By Remark \ref{remotherrepcheckfQ}
\begin{align*}
Q^* \check \fQ\,Q 
%&= b Q^*\big[ \bbbone -bQ(bQ^* Q+\fQ)^{-1}Q^*\big] Q
%\\
&=  b Q^*Q \big[ \bbbone -(bQ^* Q+\fQ)^{-1}bQ^*Q\big] 
\\
&=  b Q^*Q \big[ (bQ^* Q+\fQ)^{-1} (bQ^* Q+\fQ) -(bQ^* Q+\fQ)^{-1}bQ^*Q\big] 
\\
&= (\fQ+ b Q^*Q -\fQ) (bQ^* Q+\fQ)^{-1}\fQ
\\
&=   \fQ   - \fQ (\fQ+bQ^*Q)^{-1}  \fQ
\end{align*}
Therefore
\begin{align*}
S^{-1}-\check S^{-1} 
%&=Q_-^*  \fQ\,Q_- - \Qc^* \check \fQ\,\Qc 
%\\
%%%%
&=Q_-^*  \fQ\,Q_- - Q_-^* Q^*\check \fQ\,Q Q_-  
= Q_-^* \fQ (\fQ+bQ^*Q)^{-1}  \fQ Q_-
\end{align*}
which gives the first representation of $\check S$. 
For the proof of the second representation, first observe that,
by \eqref{eqnBSdefDe} and \eqref{eqnBSdefCascovariance},
\begin{align*}
C^{-1}(\fQ+bQ^*Q)^{-1}  
&= ( \fQ +bQ^*Q -\fQ Q_-S Q_-^*\fQ)(\fQ+bQ^*Q)^{-1}
\\
&= \bbbone - \fQ Q_- S Q_-^* \fQ (\fQ+bQ^*Q)^{-1} 
\end{align*}
so that
\begin{equation}\label{eqnBSRGCone}
C =  (\fQ+bQ^*Q)^{-1} \big\{ \bbbone -  
    \fQ\, Q_- S Q_-^*\fQ (\fQ+bQ^*Q)^{-1}\big\}^{-1}
\end{equation}
Hence, by the first representation of $\check S$,
\begin{align*}
&\big[  \Su + \Su Q_-^*\, \fQ\, C\, \fQ\, Q_- \Su\big] \check S^{-1} -\bbbone
\\
%%%
& \hskip 1cm
= \big[  \bbbone + \Su Q_-^*\, \fQ\, C\, \fQ\, Q_- \big]
\big[ \bbbone -  S Q_-^*\fQ (\fQ+bQ^*Q)^{-1}  \fQ Q_-\big] -\bbbone
\\
%%%
%& \hskip 1cm
%= \Su Q_-^*\, \fQ \big[C - (\fQ+bQ^*Q) -C \fQ Q_- S Q_-^* \fQ (\fQ+bQ^*Q)^{-1}
%\big] \fQ Q_-
%\\
%%%
& \hskip 1cm
= \Su Q_-^*\, \fQ \Big[  
C \big\{\bbbone - \fQ Q_- S Q_-^* \fQ (\fQ+bQ^*Q)^{-1} \big\} - (\fQ+bQ^*Q)^{-1} 
\Big] \fQ Q_-
\\
%%%
& \hskip 1cm =0
\end{align*}
which implies the second representation of $\check S$.

\Item (d) 
By Lemma \ref{lembSabstrlinalg} with
$q= \fQ Q_-$, $q_*= Q_-^*\fQ$, $f=S^{-1}$ and $g= -(\fQ+bQ^*Q)^{-1}$ 
\begin{equation}\label{eqnBSRGCtwo}
\begin{split}
&\big\{\bbbone -  \fQ \,Q_- S Q_-^*\fQ (\fQ+bQ^*Q)^{-1}\big\}^{-1}
\\
& \hskip 2cm= \bbbone 
+ \fQ Q_- \big[ S^{-1} -  Q_-^*\fQ (\fQ+bQ^*Q)^{-1}  \fQ Q_-\big]^{-1}Q_-^*\fQ
  (\fQ+bQ^*Q)^{-1} 
\\
& \hskip 2cm= \bbbone +  \fQ Q_- \check S Q_-^*\fQ (\fQ+bQ^*Q)^{-1}
\end{split}
\end{equation}
The second equality follows by the first representation of $\check S$ 
in part (c). Substituting \eqref{eqnBSRGCtwo} into \eqref{eqnBSRGCone} 
gives the desired representation of $C$. 

\Item (e)
By Remark \ref{remotherrepcheckfQ}
\begin{align*}
\check Q_-^* \check \fQ 
&= bQ_-^* Q^* \big[ \bbbone -bQ(bQ^*Q+\fQ)^{-1} Q^*\big]
\\
&= bQ_-^*\big[   \bbbone  -bQ^*Q(bQ^*Q+\fQ)^{-1}\big]Q^*
\\
&= bQ_-^*\fQ(bQ^*Q+\fQ)^{-1}Q^*
\end{align*}
Therefore by part (d)
\begin{align*}
bC^{(*)}Q^* 
&= \big(bQ^* Q+\fQ\big)^{-1}
\big[  bQ^* + b \fQ Q_-\check S^{(*)} Q_-^*\fQ\,(bQ^*Q+\fQ)^{-1} Q^* \big]
\\
&=  \big(bQ^* Q+\fQ\big)^{-1} 
\Big[bQ^* +\fQ Q_-\check S^{(*)} \check Q_-^* \check \fQ\Big] 
\end{align*}
\end{proof}

Define, in the setting of Proposition \propBSconcatbackgr, 
$\de\phi_{(*)\rm bg}\big(\psi_{*},\psi,\de\psi_*,\de\psi\big)$ by
\refstepcounter{equation}\label{eqnBSdephibg}
\begin{equation}
\phi_{(*)\rm bg}\big(\psi_{*}+\de\psi_*,\psi+\de\psi\big)
= \phi_{(*)\rm bg}\big(\psi_{*},\psi\big)
          +  \de\phi_{(*)\rm bg}\big(\psi_{*},\psi,\de\psi_*,\de\psi\big)
\tag{\ref{eqnBSdephibg}.a}
\end{equation}
and set
\begin{equation}
\de\check\phi_{(*)\rm bg}\big(\th_{*},\th,\de\psi_*,\de\psi\big)
=\de\phi_{(*)\rm bg}\big(\psi_{*\crt}(\th_*,\th)\,,\,
              \psi_\crt(\th_*,\th)\,,\,\de\psi_*\,,\,\de\psi\big)
\tag{\ref{eqnBSdephibg}.b}
\end{equation}
With the $\check\phi_{(*)\rm bg}(\th_*,\th)$ of 
Proposition \ref{propBSconcatbackgr} and \eqref{eqnBScheckphibgde},
\begin{equation}\label{eqnBSdephicheck}
\phi_{(*)\rm bg}\big(\psi_{*\crt}(\th_*,\th)\!+\de\psi_*,
   \psi_{\crt}(\th_*,\th)\!+\de\psi\big)
= \check\phi_{(*)\rm bg}(\th_*,\th)
      +\de\check\phi_{(*)\rm bg}\big(\th_*,\th;\de\psi_*,\de\psi\big)
\end{equation}
Also define $\de{\check\phi_{(*)}}^{(+)}\big(\th_*,\th;\de\psi_*,\de\psi\big)$
by
\begin{equation}\label{eqnBSdephicheckplus}
\de\check\phi_{(*)\rm bg}\big(\th_*,\th;\de\psi_*,\de\psi\big)
=  S^{(*)} Q_-^* \fQ\,\de\psi_{(*)}
      +\! \de{\check\phi_{(*)}}^{(+)}\big(\th_*,\th;\de\psi_*,\de\psi\big)
\end{equation}

\begin{remark}\label{remBSdephieqn}
By Remark \ref{remBSremarkonbackgroundfieldsB},
the fields $\de\check\phi_{(*)\rm bg}\big(\th_*,\th,\de\psi_*,\de\psi\big)$ 
introduced in \eqref{eqnBSdephibg} obey
\begin{align*}
 \de\check\phi_{(*)\bg}
&= S^{(*)}Q_-^* \fQ\, \de\psi_{(*)}
  - S^{(*)}P'_{(*)} (\phi_*, \phi) 
      \Big|^{\phi_{(*)}=\check\phi_{(*)\bg}(\th_*,\th)+\de\check\phi_{(*)\bg}}
     _{\phi_{(*)}=\check\phi_{(*)\bg}(\th_*,\th)}
\end{align*}
In particular, if $P=0$, then $\de\check\phi_{(*)\rm bg}
={\Su^{(*)}}Q_-^* \fQ\, \de\psi_{(*)}$. This is the motivation for the
definition of $\de{\check\phi_{(*)}}^{(+)}$ in \eqref{eqnBSdephicheckplus}.
\end{remark}

\begin{lemma}\label{lemBSdeltaAalernew}
The function $\de\cA$ appearing in the exponent of the fluctuation 
integral $\cF(\th_*,\th)$ of Proposition \ref{propBSconcatbackgr}.d is
\begin{align*}
\de \cA(\th_*,\th;\de\psi_*,\de\psi) 
&= \<\de\psi_*,C^{-1}\,\de\psi\>
-\int_0^1 \!dt\  \big< \de\psi_*\,,\,\fQ\, Q_-\,
        \de{\check\phi}^{(+)}\big(\th_*,\th;t\,\de\psi_*,t\,\de\psi\big) \big> 
\\
& \hskip 3.5cm
-\int_0^1 \!dt\  \big< \fQ\, Q_-\,
          \de{\check\phi_*}^{(+)}\big(\th_*,\th;t\,\de\psi_*,t\,\de\psi\big)
\,,\, \de\psi \big> 
\end{align*}
\end{lemma}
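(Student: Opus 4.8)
The plan is to write $\de\cA$ as an integral of a total derivative along the straight segment joining the critical field to its displacement, and to use the criticality of the background fields to kill all variation in the $\phi$-directions. Fix $(\th_*,\th)$ and the increment $(\de\psi_*,\de\psi)$, abbreviate $\psi_{(*)\crt}=\psi_{(*)\crt}(\th_*,\th)$, and set
\[
g(t)=\cA_\eff\big(\th_*,\th;\ \psi_{*\crt}+t\de\psi_*,\ \psi_\crt+t\de\psi;\ \phi_{*\bg}(\,\cdot\,),\phi_\bg(\,\cdot\,)\big),
\]
where the background fields are evaluated at the displaced arguments $\psi_{(*)\crt}+t\de\psi_{(*)}$. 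By the definition of $\de\cA$ in Proposition \ref{propBSconcatbackgr}.d one has $\de\cA=g(1)-g(0)=\int_0^1 g'(t)\,dt$, so the whole computation reduces to $g'(t)$. First I would differentiate by the chain rule. Since $\cA_\eff$ depends on $\phi_*,\phi$ only through the summand $\cA$, and since $(\phi_{*\bg},\phi_\bg)$ is by Definition \ref{defBSbackfld}.a a critical point of $(\phi_*,\phi)\mapsto\cA$, one has $\nabla_{\phi_{(*)}}\cA_\eff=\nabla_{\phi_{(*)}}\cA=0$ all along the path, so the contributions from the $t$-dependence of the background fields drop out and
\[
g'(t)=\big\<\nabla_{\psi_*}\cA_\eff,\de\psi_*\big\>+\big\<\nabla_\psi\cA_\eff,\de\psi\big\>,
\]
the gradients being taken at the displaced point.

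A direct computation gives $\nabla_{\psi_*}\cA_\eff=-bQ^*(\th-Q\psi)+\fQ(\psi-Q_-\phi)$, and the expression for $\nabla_\psi\cA_\eff$ obtained by exchanging starred and unstarred objects. Next I would insert the path. Writing $\psi=\psi_\crt+t\de\psi$ and, via \eqref{eqnBSdephicheck}, $\phi_\bg(\,\cdot\,)=\check\phi_\bg(\th_*,\th)+\de\check\phi_\bg(\th_*,\th;t\de\psi_*,t\de\psi)$, the critical-field equation \eqref{eqnBScritpointequ} (equivalently, the vanishing of $\nabla_{\psi_*}\cA_\eff$ at $t=0$, using the composition identity $\check\phi_\bg=\phi_\bg(\psi_{*\crt},\psi_\crt)$) cancels the constant part and leaves
\[
\nabla_{\psi_*}\cA_\eff\big|_t=t\big(bQ^*Q+\fQ\big)\de\psi-\fQ Q_-\,\de\check\phi_\bg(\th_*,\th;t\de\psi_*,t\de\psi).
\]
Now I invoke the splitting \eqref{eqnBSdephicheckplus}, which isolates the linear piece $t\,S Q_-^*\fQ\,\de\psi$ of $\de\check\phi_\bg$; combining it with $t(bQ^*Q+\fQ)\de\psi$ produces exactly $t\big(bQ^*Q+\fQ-\fQ Q_- S Q_-^*\fQ\big)\de\psi=t(bQ^*Q+\De)\de\psi=t\,C^{-1}\de\psi$ by \eqref{eqnBSdefDe} and \eqref{eqnBSdefCascovariance}, while the remainder is $-\fQ Q_-\,\de\check\phi^{(+)}$.

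Finally I would integrate over $t$. The linear-in-$t$ pieces integrate to $\tfrac12$ each, and the correction terms assemble the two $t$-integrals appearing in the statement; using the symmetry of $\<\,\cdot\,,\,\cdot\,\>$ the first correction, $-\<\fQ Q_-\de\check\phi^{(+)},\de\psi_*\>$, becomes $-\<\de\psi_*,\fQ Q_-\de\check\phi^{(+)}\>$, matching the claimed form. The step I expect to require the most care is reassembling the quadratic term. Because $D$ (hence $S$) need not be symmetric, the $\psi_*$-gradient yields $C^{-1}$ whereas the mirror $\psi$-gradient, using the transposed Green's function in \eqref{eqnBSdephicheckplus}, yields $(C^{-1})^{*}$ and the term $\de\check\phi_*^{(+)}$; one must verify that the factor $\int_0^1 t\,dt=\tfrac12$ together with $\big\<(C^{-1})^{*}\de\psi_*,\de\psi\big\>=\big\<\de\psi_*,C^{-1}\de\psi\big\>$ recombines the two halves into the single symmetric term $\<\de\psi_*,C^{-1}\de\psi\>$. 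The only other point needing attention is careful bookkeeping of the ``optional star,'' so that the linear part extracted from \eqref{eqnBSdephicheckplus} is matched to the correct gradient slot.
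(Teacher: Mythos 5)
Your proposal is correct and follows essentially the same route as the paper's proof: the fundamental theorem of calculus along the straight segment from $(\psi_{*\crt},\psi_\crt)$, the criticality of the background fields to kill the $\phi$-gradient contributions, the critical-field equations of Proposition \ref{propBSconcatbackgr}.a to cancel the $t$-independent part, and the splitting \eqref{eqnBSdephicheckplus} (with the transpose bookkeeping $\bigl<(C^{-1})^*\de\psi_*,\de\psi\bigr>=\bigl<\de\psi_*,C^{-1}\de\psi\bigr>$ and the two factors of $\int_0^1 t\,\sd t=\sfrac{1}{2}$) to assemble $\<\de\psi_*,C^{-1}\de\psi\>$. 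The only difference is organizational: the paper applies the fundamental theorem of calculus to $\cB(\psi_*,\psi)=\cA\big(\psi_*,\psi;\phi_{*\bg}(\psi_*,\psi),\phi_\bg(\psi_*,\psi)\big)$ alone and expands the quadratic $b\<\th_*-Q\psi_*,\th-Q\psi\>_+$ term exactly, whereas you differentiate all of $\cA_\eff$ along the path at once, which is immaterial.
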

\begin{proof}
Set
$
\cB(\psi_*,\psi)
=\cA\big(\psi_*,\psi;\phi_{*\bg}(\psi_*,\psi),\phi_\bg(\psi_*,\psi)\big)
$.
As
\begin{align*}
\big(\nabla_{\phi_*}\cA\big)\big(\psi_*,\psi;\phi_{*\bg}(\psi_*,\psi),\phi_\bg(\psi_*,\psi)\big)
=\big(\nabla_\phi\cA\big)\big(\psi_*,\psi;\phi_{*\bg}(\psi_*,\psi),\phi_\bg(\psi_*,\psi)\big)
=0
\end{align*}
we have
\begin{alignat*}{3}
\big(\nabla_{\psi_*}\cB\big)\big(\psi_*,\psi\big)
&=\big(\nabla_{\psi_*}\cA\big)\big(\psi_*,\psi;\phi_{*\bg}(\psi_*,\psi),\phi_\bg(\psi_*,\psi)\big)
&&=\fQ\big(\psi-Q_-\,\phi_\bg(\psi_*,\psi)\big)\\
\big(\nabla_\psi\cB\big)\big(\psi_*,\psi\big)
&=\big(\nabla_\psi\cA\big)\big(\psi_*,\psi;\phi_{*\bg}(\psi_*,\psi),\phi_\bg(\psi_*,\psi)\big)
&&=\fQ\big(\psi_*-Q_-\,\phi_{*\bg}(\psi_*,\psi)\big)
\end{alignat*}
Therefore
\begin{align*}
&\cB(\psi_*+\de\psi_*,\psi+\de\psi) -\cB(\psi_*,\psi) \\
%%%
& \hskip .2cm =
\int_0^1 \hskip-5ptdt\,\Big[ 
\big< \de\psi_*\,,\, (\nabla_{\psi_*}\cB)(\psi_*+t\de\psi_*,\psi+t\de\psi)\big>
+ \big<  (\nabla_{\psi}\cB)(\psi_*+t\de\psi_*,\psi+t\de\psi)\,,\,\de\psi\big> 
\Big]\\\noalign{\vskip0.05in}
%%%%
& \hskip .2cm =
\int_0^1 dt \ 
\big< \de\psi_*\,,\, \fQ(\psi+t\de\psi) 
          -\fQ\,Q_-\,\phi_\bg(\psi_*+t\de\psi_*,\psi+t\de\psi)\big> \cr
\noalign{\vskip-0.05in}& \hskip 2cm+ \int_0^1 dt \ 
\big<  \fQ(\psi_*+t\de\psi_*)
          -\fQ\,Q_-\,\phi_{* \bg}(\psi_*+t\de\psi_*,\psi+t\de\psi)
   \,,\,\de\psi\big> \\ \noalign{\vskip0.05in}
%%%%
& \hskip .4cm = \big< \de\psi_*, \fQ \,\de\psi \big>
   + \big< \de\psi_*, \fQ\, \psi \big> + \big< \psi_*, \fQ \,\de\psi \big> -I
\end{align*}
where
\begin{align*}
I&= 
\int_0^1 dt\ \big< \de\psi_*\,,\, 
   \fQ\,Q_-\,\phi_\bg(\psi_{* \crt}+t\de\psi_*, \psi_\crt+t\de\psi)\big> 
\\ \noalign{\vskip-0.05in}
& \hskip 1cm + \int_0^1 dt\ 
\big< \fQ\,Q_-\,\phi_{* \bg}(\psi_{* \crt}+t\de\psi_*,\psi_\crt+t\de\psi)
   \,,\,\de\psi\big>
\end{align*}
Since 
\begin{equation*}
\cA_\eff\big(\th_*,\th;\psi_*,\psi;  
        \phi_{*\bg}(\psi_{*},\psi), \phi_\bg(\psi_*,\psi)\big)
=b\<\th_*-Q\psi_*,\th-Q\psi\>_+
  +\cB(\psi_*,\psi)
\end{equation*}
we get, using Proposition \ref{propBSconcatbackgr},
\begin{align*}
\de\cA &= b\<Q\,\de\psi_*\,,\,Q\,\de\psi\>_+ 
         -b\< Q\,\de\psi_*\,,\,\th-Q\psi_\crt\>_+
         -b\<\th_*-Q\psi_{* \crt}\,,\, Q\,\de\psi\>_+ \\
& \hskip 1cm + \big< \de\psi_*\,,\, \fQ \,\de\psi \big>
   + \big< \de\psi_*\,,\, \fQ\, \psi_\crt \big> 
   + \big< \psi_{* \crt}\,,\, \fQ \,\de\psi \big> 
   -I \displaybreak[0]\\ \noalign{\vskip0.05in}
%%%%
& = \<\de\psi_*\,,\,(bQ^* Q + \fQ)\,\de\psi\>
+ \< \de\psi_*\,,\,(bQ^* Q + \fQ)\psi_\crt -bQ^* \th\> \\
& \hskip 3.9cm
+ \< (bQ^* Q + \fQ)\psi_{* \crt} -bQ^* \th_*\,,\, \de\psi\>
-I
\displaybreak[0]\\ \noalign{\vskip0.05in}
%%%%
& = \<\de\psi_*\,,\,(bQ^* Q + \fQ)\,\de\psi\>
+ \< \de\psi_*\,,\,\fQ\, Q_- \check\phi_\bg \> 
+ \< \fQ\, Q_-\check \phi_{* \bg} \,,\, \de\psi\> -I
\displaybreak[0]\\ \noalign{\vskip0.05in}
%%%%
& = \<\de\psi_*\,,\,(bQ^* Q\!+\!\fQ)\,\de\psi\>
-\int_0^1 \hskip-5pt dt\  \big< \de\psi_*\,,\,\fQ\, Q_-
\big[\phi_\bg(\psi_{* \crt}+t\de\psi_*,\psi_\crt+t\de\psi) - \check\phi_\bg\big] \big> 
\\
& \hskip 3.6cm
-\int_0^1 \!dt\  \big< \fQ\, Q_-
\big[\phi_{*\bg}(\psi_{* \crt}+t\de\psi_*,\psi_\crt+t\de\psi) - \check\phi_{*\bg}\big]
\,,\, \de\psi \big> 
\\ \noalign{\vskip0.05in}
%%%%
& = \<\de\psi_*,(bQ^* Q\!+\! \fQ\!-\!\fQ Q_-\Su Q_-^* \fQ)\,\de\psi\>
-\!\int_0^1 \hskip-5pt dt\,  \big< \de\psi_*,\fQ\, Q_-\,
\de{\check\phi}^{(+)}\big(\th_*,\th;t\de\psi_*,t\de\psi\big) \big> 
\\
& \hskip 6cm
-\int_0^1 \!dt\,  \big< \fQ\, Q_-\,
\de{\check\phi_*}^{(+)}\big(\th_*,\th;t\de\psi_*,t\de\psi\big)
,\, \de\psi \big> 
\end{align*}
By the definition of $C$ in \eqref{eqnBSdefCascovariance}, this is 
the desired representation.
\end{proof}

\bigskip
In the course of the arguments above the following simple algebraic 
observation was used several times.
\begin{lemma}\label{lembSabstrlinalg}
Let $V$ and $W$ be vector spaces and let $\,q:V\rightarrow W\,$, $\,q_*:W\rightarrow V\,$,
$\,f:V\rightarrow V\,$ and $\,g:W\rightarrow W\,$ be linear maps. Assume that
$f$ and $\,f+q_*g\,q\,$ are invertible.
Then $\,\bbbone_W +gq\,f^{-1}q_*\,$ and $\,\bbbone_W +q\,f^{-1}q_*g\,$
are also invertible and
\begin{align*}
\big( \bbbone_W +gq\,f^{-1}q_* \big)^{-1} &= \bbbone_W -gq (f+q_*gq)^{-1} q_*\\
\big( \bbbone_W +q\,f^{-1}q_*g \big)^{-1} &= \bbbone_W -q (f+q_*gq)^{-1} q_*g
\end{align*}
\end{lemma}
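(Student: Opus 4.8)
The plan is to prove both identities by direct algebraic verification, viewing them as an abstract Woodbury/Sherman--Morrison formula. Because the statement asserts that $\bbbone_W+gq\,f^{-1}q_*$ and $\bbbone_W+q\,f^{-1}q_*g$ are invertible with the indicated inverses, and because $V,W$ are only assumed to be vector spaces (no topology, and in principle infinite dimensional), I would not appeal to any ``one-sided inverse implies two-sided'' principle; instead I would check, for each identity, that the proposed expression is a genuine two-sided inverse. Writing $M=(f+q_*gq)^{-1}$, which exists by hypothesis, this amounts to computing the four products $(\bbbone_W\pm\cdots)(\bbbone_W\mp\cdots)$ in both orders and showing each equals $\bbbone_W$.

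The single device that drives every cancellation is the pair of factorizations
\[
\bbbone_V + f^{-1}q_*g\,q = f^{-1}\big(f+q_*g\,q\big),\qquad
\bbbone_V + q_*g\,q\,f^{-1} = \big(f+q_*g\,q\big)f^{-1},
\]
both immediate from $f^{-1}f=\bbbone_V$. For the first identity I would expand the product, collect the three off-diagonal terms with common left factor $gq$ and common right factor $q_*$, and obtain
\[
(\bbbone_W+gq\,f^{-1}q_*)(\bbbone_W-gq\,M\,q_*)
=\bbbone_W + gq\big[f^{-1}-M-f^{-1}q_*g\,q\,M\big]q_*.
\]
The bracket is $f^{-1}-\big(\bbbone_V+f^{-1}q_*g\,q\big)M=f^{-1}-f^{-1}(f+q_*gq)M=f^{-1}-f^{-1}=0$, using the left factorization and $M=(f+q_*gq)^{-1}$. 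The reversed product collects instead into $gq\big[f^{-1}-M-M\,q_*g\,q\,f^{-1}\big]q_*$, whose bracket vanishes by the same computation read through the right factorization. The striking point to exploit is that the second identity reduces to the \emph{identical} bracket: pulling out the common factors $q$ on the left and $q_*g$ on the right gives $\bbbone_W+q\big[f^{-1}-M-f^{-1}q_*g\,q\,M\big]q_*g$, so the very same vanishing bracket finishes it (and its mirror for the reversed order).

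I expect no genuine obstacle here; the content is routine once the factorization is isolated, which is why it is stated as a ``simple algebraic observation.'' The only points needing care are bookkeeping ones: $q$ and $q_*$ do not commute with $f$ and $g$, so the common factors must be extracted on the correct side and in the correct order, and one should verify both one-sided products (rather than silently assume finite dimensionality) so that the invertibility claim holds in the stated abstract generality. Carrying out the analogous two computations for $\bbbone_W+q\,f^{-1}q_*g$ then completes the proof.
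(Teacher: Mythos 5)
Your proof is correct and takes essentially the same approach as the paper: direct verification that the proposed expression is a two-sided inverse, with every cancellation driven by the factorization $\bbbone_V+f^{-1}q_*g\,q=f^{-1}(f+q_*g\,q)$ and its mirror, and with both orders of multiplication checked so that no finite-dimensionality is invoked. The only cosmetic difference is that the paper first normalizes to $g=\bbbone_W$ by substituting $q\mapsto gq$ (resp.\ $q_*\mapsto q_*g$), which merges the two identities into one computation, whereas you keep $g$ general and instead observe that both identities reduce to the same vanishing bracket.
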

\begin{proof} 
Replacing $q$ by $gq$ for the first line and $q_*$ by $q_*g$ for the second, 
we may assume that $g=\bbbone_W$. Write $\bbbone_W=\bbbone$. Then
\begin{align*}
\big(\bbbone -q (f+q_*q)^{-1} q_*\big) \big( \bbbone +qf^{-1}q_* \big)
%&= \bbbone +q \big[f^{-1} -(f+q_*q)^{-1} - (f+q_*q)^{-1}q_*qf^{-1}\big] q_*
%\cr
%%%
&= \bbbone +q \big[\bbbone -(f+q_*q)^{-1}f - (f+q_*q)^{-1}q_*q\big] f^{-1} q_*
\\
%%%
&= \bbbone 
%%%
\end{align*}
and similarly
$\,
\big( \bbbone +qf^{-1}q_* \big) \big(\bbbone -q (f+q_*q)^{-1} q_*\big) =\bbbone
\,$.
\end{proof}

\newpage
%%%%%%%%%%%%%%%%%%%%%%%%%%%%%%%%%%%%%%
\bibliographystyle{plain}
\bibliography{refs}
%%%%%%%%%%%%%%%%%%%%%%%%%%%%%%%%%%%%%

%\printIssueCount
\end{document}